\newtheorem{theorem}{Theorem}
\theoremstyle{plain}
\newtheorem{corollary}{Corollary}
\newtheorem{lemma}{Lemma}
\newtheorem{proposition}{Proposition}
\numberwithin{equation}{section}
\newcommand{\rndbr}[1]{\left(#1\right)}
\newcommand{\clbr}[1]{\left\{#1\right\}}
\newcommand{\bm}{\mathbf}
\begin{document}

\title[Explicit solutions to two econometric models]{Explicit solutions to a vector time series model \\ and its induced model for business cycles}
\author{Xiongzhi Chen}
\address{Center for Statistics and Machine Learning and Lewis-Sigler Institute for Integrative Genomics\\
Princeton University\\
Princeton, NJ 08544, USA}
\email{xiongzhi@princeton.edu}
\subjclass[2010]{Primary 91B62, 62M10; Secondary 91B70}
\keywords{Cyclic economic behavior, Jordan canonical form, Second order difference equation, Stationarity, Vector autoregressive model, Weak convergence.}

\begin{abstract}
This article gives the explicit solution to a general vector time series model that describes interacting,
heterogeneous agents that operate under uncertainties but according to
Keynesian principles, from which a model for business cycle is induced by a weighted average of the growth rates of the agents in the model. The explicit solution enables a direct simulation of the time series defined by the
model and better understanding of the joint behavior of the growth rates.
In addition, the induced model for business cycles and its solutions are explicitly given and analyzed.
The explicit solutions provide a better understanding of the mathematics of these models and the econometric properties they try to incorporate.
\end{abstract}
\maketitle
\tableofcontents

\section{Introduction\label{Sec:Intro}}

The study and modeling of business cycles have significant importance in
economic theory and practice; see, e.g., \cite{Burns:1946}, \cite{Lucas:1977},
\cite{Kydland:1977} and \cite{Kydland:1982}. Models of business cycles have
various forms as surveyed in \cite{Ahking:1988} and \cite{Lucas:1991}. Among
them, one was proposed (earlier but then published) in \cite{Ormerod:2001},
and it is based on interacting, heterogeneous agents that behave under
uncertainties about the future but according Keynesian principles. The model
is quite general, and the rationale and good performance of the model are
detailed in Chapter 9 of \cite{Ormerod:1998}.
Even though only partial solution of the model as ``integral equations'' has
 been provided in \cite{Ormerod:2001} or \cite{Ormerod:1998}, neither the mathematical properties of the model nor
how the individual agents' growth rates that are tied by the model should behave has been analyzed. Further, the induced
model for business cycles has not been analyzed mathematically. This makes
understanding of the long term econometric behavior of the growth rates employed in the model and of the
business cycles the induced model is able to capture, less
transparent and somewhat difficult. To resolve these issues, we derive the
explicit solutions to the model and the induced model, analyze the key
properties of these solutions, and make connections between their
mathematical features and econometric implications.

The rest of the article is organized as follows. In \autoref{Sec:Model} we
state the autoregressive model, and the explicit decomposition of the transition matrix (see (\ref{2}))
involved in the model is provided in \autoref{Sec:CanonicalForm}. The explicit solution of the model and its
properties are given in \autoref{Sec:Solution}, and solutions to the induced model for business cycles
and their properties are explored in \autoref{sec:busiCyc}. A brief discussion in
\autoref{Sec:Discussion} ends the article.

\section{The Vector Autoregressive Model\label{Sec:Model}}

For $i=1,\ldots,n$ with $n\in\mathbb{R}$ and $t\in\mathbb{Z}_{+}=\left\{
m\in\mathbb{Z}:m\geq0\right\}  $, let $x_{i}\left(  t\right)  $ be the growth
rate of the output of the $i$'th firm in period $t$ and $y_{i}\left(  t\right)  $
the rate of change of the sentiment about the future of the $i$'th firm formed
in period $t$. Further, define
\[
\mathcal{C}_{n}=\left\{  \mathbf{w}=\left(  w_{1},\ldots,w_{n}\right)
\in\mathbb{R}^{n}:\min_{1\leq i\leq n}w_{i}>0,\sum\nolimits_{i=1}^{n}
w_{i}=1\right\}  \text{.}
\]
The overall rate of growth of the output is the weighted sum of the individual
growth rates, defined as $\bar{x}\left(  t\right)  =\sum_{i=1}^{n}b_{i}
x_{i}\left(  t\right)  $ for some $\mathbf{b}=\left(  b_{1},\ldots,b_{n}\right)  \in
\mathcal{C}_{n}$, and the overall rate of growth of
sentiment is the weighted sum of the individual $y_{i}\left(  t\right)  $,
defined as $\bar{y}\left(  t\right)  =\sum_{i=1}^{n}a_{i}y_{i}\left(
t\right)  $ for some $\mathbf{a}=\left(  a_{1},\ldots
,a_{n}\right)  \in\mathcal{C}_{n}$. Further, $\left\{  x_{i}\left(  t\right)  \right\}
_{i=1}^{n}$ and $\left\{  y_{i}\left(  t\right)  \right\}  _{i=1}^{n}$ are
related by the model proposed in \cite{Ormerod:2001} as%
\begin{equation}
\left\{
\begin{array}
[c]{c}%
x_{i}\left(  t+1\right)  =\left(  1-\alpha\right)  x_{i}\left(  t\right)
+\alpha\left[  \bar{y}\left(  t\right)  +\varepsilon_{i}\left(  t\right)
\right] \\
y_{i}\left(  t+1\right)  =\left(  1-\beta\right)  y_{i}\left(  t\right)
-\beta\left[  \bar{x}\left(  t\right)  +\eta_{i}\left(  t\right)  \right]
\end{array}
\right.  \label{1}
\end{equation}
for constants $\alpha,\beta\in\mathbb{R}$, $\varepsilon_{i}\rndbr{t}$ $\propto
\mathsf{N}\left(  \mu_{i},\sigma_{i}^{2}\right)  $ and $\eta_{i}\rndbr{t}
=\varepsilon_{n+i}\rndbr{t}\propto\mathsf{N}\left(  \mu_{n+i},\sigma_{n+i}^{2}\right)
$ for $\mu_{i},\mu_{n+i}\in\mathbb{R}$ and $\sigma_{i}$ and $\sigma_{n+i}>0$,
where $\xi\propto\mathsf{N}\left(  \mu,\sigma^{2}\right)  $ means that $\xi$
is Normally distributed and has density
\[
g_{\mu,\sigma^{2}}\left(  x\right)  =\left(  \sqrt{2\pi}\sigma\right)
^{-1}\exp\left[  -\left(  2\sigma^{2}\right)  ^{-1}\left(  x-\mu\right)
^{2}\right]
\]
for $x\in\mathbb{R}$. Here we assume that all random vectors are defined on a
common probability space $\left(  \Omega,\mathcal{F},\mathbb{P}\right)  $,
where $\Omega$ is the sample space, $\mathcal{F}$ a sigma algebra on $\Omega$,
and $\mathbb{P}$ the probability measure.

Let $\boldsymbol{x}_{t}=\left(  x_{1}\left(  t\right)  ,\ldots,x_{n}\left(
t\right)  \right)  $, $\boldsymbol{y}_{t}=\left(  y_{1}\left(  t\right)
,\ldots,y_{n}\left(  t\right)  \right)  $, $\boldsymbol{z}_{t}=\left(
\boldsymbol{x}_{t},\boldsymbol{y}_{t}\right)  ^{T}$, $\boldsymbol{\varepsilon}
_{t}=\left(  \varepsilon_{1}\left(  t\right)  ,\ldots,\varepsilon_{n}\left(
t\right)  \right)  $,
$\boldsymbol{\eta}_{t}=\left(  \eta_{1}\left(  t\right)
,\ldots,\eta_{n}\left(  t\right)  \right)  $, and $\boldsymbol{\gamma
}_{t}=\left(  \alpha\boldsymbol{\varepsilon}\left(  t\right)  ,-\beta
\boldsymbol{\eta}\left(  t\right)  \right)  ^{T}$, where the superscript $T$
denote transpose of a matrix. Further, let$\ \mathcal{M}_{s\times s^{\prime}}$
with $s,s^{\prime}\in\mathbb{N}$ be the set of $s\times s^{\prime}$ real
matrices, which is denoted by $\mathcal{M}_{s}$ when $s=s^{\prime}$. Then model
(\ref{1}) can be rewritten as
\begin{equation}
\boldsymbol{z}_{t+1}=\mathbf{M}\boldsymbol{z}_{t}+\boldsymbol{\gamma}_{t},
\label{3}
\end{equation}
where the ``transition matrix''
\begin{equation}
\mathbf{M}=\left(
\begin{array}
[c]{cc}
\left(  1-\alpha\right)  \mathbf{I}_{n} & \alpha\mathbf{1}_{n}^{T}\mathbf{a}\\
-\beta\mathbf{1}_{n}^{T}\mathbf{b} & \left(  1-\beta\right)  \mathbf{I}_{n}
\end{array}
\right)  \in\mathcal{M}_{2n}, \label{2}
\end{equation}
$\mathbf{I}_{s}$ denotes the $s\times s$ identity matrix, and $\mathbf{1}_{s}$ is a row vector of $s$ one's.

Model \eqref{3} bundles the agents' growth rates stored in the vector $\boldsymbol{z}_{t}$ together,
in which $\boldsymbol{z}_{t}$ is sent into the very near future by the mapping induced by the matrix $\mathbf{M}$.
Therefore, it put restrictions on how $\boldsymbol{z}_{t}$
should behave jointly. However, no explicit solution in $\clbr{\boldsymbol{z}_{t}}_{\mathbb{Z}_{+}}$  has been available.
This makes hard the direct simulation of $\clbr{\boldsymbol{z}_{t}}_{\mathbb{Z}_{+}}$ and difficult the understanding of the long term behavior of $\clbr{\boldsymbol{z}_{t}}_{\mathbb{Z}_{+}}$ both mathematically and
econometrically.

\section{Decomposition of The Transition Matrix \label{Sec:CanonicalForm}}

To probe into the long term behavior of $\clbr{\boldsymbol{z}_{t}}_{t\in \mathbb{Z}_{+}}$ and how they are tied together in  model (\ref{3}), an
efficient strategy is to decompose the $2n\times2n$ matrix $\mathbf{M}$ into products of simpler matrices. To this end,
we need to understand the nontrivial invariant subspaces, if they exist, of the mapping induced by the $2n\times2n$
matrix $\mathbf{M}$. To maintain good economic meaning of model \eqref{3}, it is natural to assume
\begin{equation}
\rndbr{\alpha,\beta} \notin \clbr{\rndbr{0,0},\rndbr{1,1}}. \label{eq:rst}
\end{equation}
The results to be presented in this section are independent of the distributional assumptions on $\boldsymbol{\gamma}_t$
for $t \in \mathbb{Z}_{+}$.

\subsection{Jordan Canonical Form of $\bm{M}$}\label{subsec:Jform}

In this subsection, we provide the Jordan canonical form (see, e.g., \cite{Jacobson:1953} for a definition) of $\mathbf{M}$
in the vector space $\mathbb{R}^{2n}$ over $\mathbb{R}$. This will help convert the iterative identity, i.e., (\ref{3}), for $\clbr{\boldsymbol{z}_{t}}_{t\in \mathbb{Z}_{+}}$ into a direct, explicit representation in \eqref{28} without computing the averages $\bar{x}_t$ or $\bar{y}_t$. For $\theta\in \mathbb{R\ }$set $\mathbf{J}_{1}\left(  \theta\right)  =\theta$, and for a natural number $r\geq2$ define the Jordan block
\begin{equation}
\mathbf{J}_{r}\left(  \theta\right)  =\left(
\begin{array}
[c]{ccccc}%
\theta & 1 &  &  & \\
& \theta & 1 &  & \\
&  & \ddots & \ddots & \\
&  &  & \theta & 1\\
&  &  &  & \theta
\end{array}
\right)  \in\mathcal{M}_{r} \label{4'}
\end{equation}
whose diagonal entries are all $\theta$, superdiagonal entries are all $1$,
and unmarked entries are identically zero. Let
$f\left(  \lambda\right)  =\left\vert \lambda\mathbf{I}-\mathbf{M}\right\vert$ be the characteristic polynomial of $\mathbf{M}$, $\Delta = \alpha^2 + \beta^2 - 6 \alpha \beta$, $d_1 = \rndbr{3 - 2\sqrt{2}}\beta$ and $d_2 = \rndbr{3+2\sqrt{2}}\beta$. The following theorem gives the roots of $f\left(  \lambda\right)$ and conditions on if $\mathbf{M}$ can be diagonalized.

\begin{theorem}
\label{Thm:JordanForm}The characteristic polynomial
\begin{equation}
f\left(  \lambda\right)=\left(  \lambda-1+\beta\right)  ^{n-1}\left(  \lambda-1+\alpha\right)
^{n-1}g\rndbr{\lambda} \label{8a}
\end{equation}
with
\begin{equation}
g\rndbr{\lambda} = \left(  \lambda-1\right)^2 + \left(  \lambda-1\right)\left(\alpha + \beta\right)
+2\alpha\beta.\label{8b}
\end{equation}
So, $f\left(  \lambda\right)$ always has real roots $\lambda_1 = 1 - \alpha$ and $\lambda_2 = 1 - \beta$.
In addition, the following hold:
\begin{enumerate}
  \item If $\min\clbr{d_1,d_2} < \alpha < \max\clbr{d_1,d_2}$, then $f\rndbr{\lambda}$ has no other real roots
       and $\mathbf{M}$ can not be diagonalized in the vector space $\mathbb{R}^{2n}$ over $\mathbb{R}$.

  \item If $\alpha \leq \min\clbr{d_1,d_2}$ or $\alpha \geq \max\clbr{d_1,d_2}$, then $f\left(  \lambda\right)$ has two more real roots $\lambda_3 = 2^{-1}\rndbr{2-\alpha -\beta + \sqrt{\Delta}}$ and $\lambda_4 = 2^{-1}\rndbr{2-\alpha -\beta - \sqrt{\Delta}}$. If $\alpha < \min\clbr{d_1,d_2}$ or $\alpha > \max\clbr{d_1,d_2}$, the Jordan canonical form of $\mathbf{M}$ is the diagonal matrix
   \begin{equation}
    \mathbf{J=Q}^{-1}\mathbf{MQ}=\mathsf{diag}\left\{  \lambda_{1}\mathbf{I}
    _{n-1},\lambda_{3}\mathbf{I}_{1},\lambda_{2}\mathbf{I}_{n-1},\lambda_{4}\mathbf{I}_{1}\right\}  \label{4}
    \end{equation}
  for some nonsingular matrix $\mathbf{Q}\in\mathcal{M}_{2n}$.
  However, if $\alpha = d_1$ or $\alpha = d_2$, $\mathbf{M}$ can not be diagonalized in the vector space $\mathbb{R}^{2n}$ over $\mathbb{R}$ and the Jordan canonical form of $\mathbf{M}$ is
  \begin{equation}
  \mathbf{J=Q}^{-1}\mathbf{MQ}=\mathsf{diag}\left\{ \lambda _{1}\mathbf{I}
  _{n-1},\lambda _{2}\mathbf{I}_{n-1},\mathbf{J}_{2}\left( \lambda _{3}\right) \right\}\label{4a}
  \end{equation}
  for some nonsingular matrix $\mathbf{Q}\in\mathcal{M}_{2n}$.
\end{enumerate}

\end{theorem}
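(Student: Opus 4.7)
My plan is to first derive the factorization (\ref{8a})--(\ref{8b}) of $f(\lambda)$ by a block determinant computation, and then determine the Jordan structure in each sign regime of the discriminant $\Delta$. For the characteristic polynomial, I will write $\lambda\mathbf{I}_{2n}-\mathbf{M}$ as a $2\times 2$ block matrix with diagonal blocks $A=(\lambda-1+\alpha)\mathbf{I}_n$ and $D=(\lambda-1+\beta)\mathbf{I}_n$ and rank-$1$ off-diagonal blocks $B=-\alpha\mathbf{1}_n^T\mathbf{a}$ and $C=\beta\mathbf{1}_n^T\mathbf{b}$, and apply the Schur-complement identity $|\lambda\mathbf{I}_{2n}-\mathbf{M}|=|A|\,|D-CA^{-1}B|$ for $\lambda\neq 1-\alpha$. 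The key simplification is that $\mathbf{b}\mathbf{1}_n^T=\sum_i b_i=1$, which collapses the product $CA^{-1}B$ into a scalar multiple of the single rank-$1$ matrix $\mathbf{1}_n^T\mathbf{a}$; the matrix determinant lemma, combined with $\mathbf{a}\mathbf{1}_n^T=1$, then evaluates the Schur complement in closed form. Collecting powers yields (\ref{8a})--(\ref{8b}), and the formula extends to $\lambda=1-\alpha$ because both sides are polynomials in $\lambda$.

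I would then solve $g(\lambda)=0$ directly. Substituting $u=\lambda-1$ reduces it to $u^2+(\alpha+\beta)u+2\alpha\beta=0$, whose discriminant is $\Delta=\alpha^2+\beta^2-6\alpha\beta$; the quadratic formula produces the stated $\lambda_3,\lambda_4$. Regarded as a quadratic in $\alpha$ for fixed $\beta$, $\Delta$ has roots exactly $d_1$ and $d_2$, so $\Delta<0$, $\Delta=0$, and $\Delta>0$ correspond respectively to $\alpha$ strictly between, equal to, or strictly outside $\min\{d_1,d_2\}$ and $\max\{d_1,d_2\}$. This is the trichotomy that drives the three subcases of the theorem.

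For the Jordan structure I will treat the three regimes separately. When $\Delta<0$, $g$ has a non-real conjugate pair of roots, so $\mathbf{M}$ has complex eigenvalues and cannot be diagonalized over $\mathbb{R}$. When $\Delta>0$, I would solve $(\mathbf{M}-\lambda_j\mathbf{I})\mathbf{z}=\mathbf{0}$ for each eigenvalue: for $\lambda_1=1-\alpha$, the rank-$1$ structure of the off-diagonal blocks together with $\mathbf{a}\mathbf{1}_n^T=\mathbf{b}\mathbf{1}_n^T=1$ forces $\mathbf{y}=\mathbf{0}$ and $\mathbf{b}\mathbf{x}=0$, producing an $(n-1)$-dimensional eigenspace; the same reasoning applied to $\lambda_2=1-\beta$ supplies another $n-1$ dimensions; and the simple roots $\lambda_3,\lambda_4$ each contribute one eigenvector of the form $(\mathbf{1}_n^T,c_j\mathbf{1}_n^T)^T$ for an explicit scalar $c_j$. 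The resulting $2n$ eigenvectors assemble into a nonsingular $\mathbf{Q}$ delivering (\ref{4}). When $\Delta=0$, the same eigenvector computation shows that the repeated root $\lambda_3=\lambda_4$ still has only a one-dimensional eigenspace; closing out the Jordan chain by solving $(\mathbf{M}-\lambda_3\mathbf{I})\mathbf{v}=\mathbf{u}$ for a generalized eigenvector $\mathbf{v}$ then produces the single $\mathbf{J}_2(\lambda_3)$ block and hence (\ref{4a}).

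The main obstacle will be the boundary case $\alpha\in\{d_1,d_2\}$: I must verify that the eigenspace of the doubled eigenvalue $\lambda_3$ is genuinely one-dimensional, so that a Jordan block is forced, and that the inhomogeneous system $(\mathbf{M}-\lambda_3\mathbf{I})\mathbf{v}=\mathbf{u}$ is consistent for the eigenvector $\mathbf{u}$. Both reduce to rank calculations on a $2\times 2$ block operator whose off-diagonal blocks are rank $1$, and the defining equation $\Delta=0$ is precisely what aligns the range of $\mathbf{M}-\lambda_3\mathbf{I}$ with $\mathrm{span}\{\mathbf{u}\}$. The regime $\Delta>0$ is comparatively routine but still requires checking linear independence of the $2n$ candidate eigenvectors, which reduces to a short nonvanishing computation using $\lambda_3\neq\lambda_4$ and the splitting of the top and bottom blocks in the eigenspaces of $\lambda_1,\lambda_2$.
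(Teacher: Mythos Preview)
Your proposal is correct and follows essentially the same route as the paper: a Schur-complement/block-triangularization together with the rank-$1$ determinant identity (using $\mathbf{a}\mathbf{1}_n^T=\mathbf{b}\mathbf{1}_n^T=1$) for the factorization (\ref{8a})--(\ref{8b}), followed by geometric-multiplicity counts for the Jordan structure. The only cosmetic differences are that (i) you dispose of the singularity at $\lambda=1-\alpha$ by polynomial continuity, whereas the paper perturbs to $\mathbf{M}_\varepsilon$ with $(1-\alpha-\varepsilon)\mathbf{I}_n$ in the top-left block and lets $\varepsilon\to 0$; and (ii) in the boundary case $\Delta=0$ you plan to exhibit a generalized eigenvector by solving $(\mathbf{M}-\lambda_3\mathbf{I})\mathbf{v}=\mathbf{u}$, whereas the paper simply reuses the block reduction to show $\mathsf{rank}(\mathbf{M}-\lambda_3\mathbf{I}_{2n})=2n-1$ and infers the $\mathbf{J}_2(\lambda_3)$ block from the algebraic/geometric multiplicity mismatch---which is a little less work than your consistency check.
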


\begin{proof}
For some $\varepsilon \geq 0$, let
\begin{equation}
\mathbf{M}_{\varepsilon}=\left(
\begin{array}
[c]{cc}
\left( 1-\alpha -\varepsilon\right)  \mathbf{I}_{n} & \alpha\mathbf{1}_{n}^{T}\mathbf{a}\\
-\beta\mathbf{1}_{n}^{T}\mathbf{b} & \left(  1-\beta\right)\mathbf{I}_{n}
\end{array}
\right)  , \label{7}
\end{equation}
and
\begin{equation}
\mathbf{M}_{\lambda,\varepsilon}= \lambda\mathbf{I}_{2n}-\mathbf{M}_{\varepsilon}=\left(
\begin{array}
[c]{cc}
\left(  \lambda-1+\alpha+\varepsilon\right)  \mathbf{I}_{n} & -\alpha\mathbf{1}_{n}^{T}\mathbf{a}\\
\beta\mathbf{1}_{n}^{T}\mathbf{b} & \left(  \lambda-1+\beta\right)
\mathbf{I}_{n}
\end{array}
\right). \label{7}
\end{equation}
Then $\mathbf{M}_{0} = \mathbf{M}$, and it's obvious that $\lambda_{1}=1-\alpha$ and $\lambda_{2}=1-\beta$ are roots
of $f\left(  \lambda\right)$ since $\mathsf{rank}\left(  \mathbf{M}_{\lambda,0}\right)  =n+1<2n$
when $\lambda = \lambda_1$ or $\lambda = \lambda_2$. To find other roots of $f\left(  \lambda\right)$, set
\[
\mathbf{T}_{\varepsilon}=\left(
\begin{array}
[c]{cc}
\mathbf{I}_{n} & \mathbf{0}\\
\frac{-\beta}{\lambda-1+\alpha + \varepsilon}\mathbf{1}_{n}^{T}\mathbf{b} & \mathbf{I}_{n}
\end{array}
\right)  \in\mathcal{M}_{2n}
\]
where $\varepsilon >0$ is now assumed and $\mathbf{0}$ denotes a matrix with identical zero entries of compatible
dimension. Then $\left\vert \mathbf{T}_{\varepsilon}\right\vert =1$ and
\begin{align}
\mathbf{\tilde{M}}_{\varepsilon}  &  =\mathbf{T}_{\varepsilon}\mathbf{M}_{\lambda,\varepsilon}\nonumber\\
&  =\left(
\begin{array}
[c]{cc}
\mathbf{I}_{n} & \mathbf{0}\nonumber\\
\frac{-\beta}{\lambda-1+\alpha+\varepsilon}\mathbf{1}_{n}^{T}\mathbf{b} & \mathbf{I}_{n}
\end{array}
\right)  \left(
\begin{array}
[c]{cc}
\left(  \lambda-1+\alpha+\varepsilon\right)  \mathbf{I} & -\alpha\mathbf{1}^{T}\mathbf{a}\\
\beta\mathbf{1}^{T}\mathbf{b} & \left(  \lambda-1+\beta\right)\mathbf{I}_{n}
\end{array}
\right) \\
&  =\left(
\begin{array}
[c]{cc}
\left(  \lambda-1+\alpha+\varepsilon\right)  \mathbf{I}_{n} & -\alpha\mathbf{1}_{n}^{T}\mathbf{a}\\
\mathbf{0} & \frac{\alpha\beta}{\lambda-1+\alpha+\varepsilon}\mathbf{b1}_{n}
^{T}\mathbf{1}_{n}^{T}\mathbf{a}+\left(  \lambda-1+\beta\right)\mathbf{I}_{n}
\end{array}
\right)  \text{.}\label{eq:8c}
\end{align}
Consequently, using the fact $\mathbf{b1}_{n}^{T}=1$ and Sylvester's
determinant theorem, we obtain
\begin{align}
f\left(  \lambda\right)   &  =\left\vert \lambda\mathbf{I}_{2n}-\mathbf{M}
\right\vert =\lim_{\varepsilon \rightarrow 0}\left\vert \mathbf{T}_{\varepsilon}^{-1}\right\vert
\left\vert \mathbf{\tilde{M}}_{\varepsilon}\right\vert =\lim_{\varepsilon \rightarrow 0}\left\vert \mathbf{\tilde{M}}_{\varepsilon}\right\vert \nonumber\\
&  =\lim_{\varepsilon \rightarrow 0}\left\vert \left(  \lambda-1+\alpha+\varepsilon\right)  \mathbf{I}\right\vert
_{n}\left\vert \frac{\alpha\beta}{\lambda-1+\alpha+\varepsilon}\mathbf{1}_{n}
^{T}\mathbf{a}+\left(  \lambda-1+\beta\right)  \mathbf{I}_{n}\right\vert \nonumber\\
&  =\lim_{\varepsilon \rightarrow 0}\left(  \lambda-1+\alpha+\varepsilon\right)  ^{n}\left(  \lambda-1+\beta\right)
^{n-1}\left[  \left(  \lambda-1+\beta\right)  +\frac{\alpha\beta}
{\lambda-1+\alpha+\varepsilon}\right] \nonumber
\end{align}
Thus, \eqref{8a} and \eqref{8b} hold, i.e.,
\[
f\rndbr{\lambda}=\left(  \lambda-1+\beta\right)  ^{n-1}\left(  \lambda-1+\alpha\right)^{n-1}g\rndbr{\lambda}.
\]
This means that $f\rndbr{\lambda}$ always has roots $\lambda_1 = 1 - \alpha$ and $\lambda_2 = 1 - \beta$.

Now we deal with the extra roots of $f\rndbr{\lambda}$, for which the theory in \cite{Jacobson:1953} on Jordan canonical form will be applied. Without loss of generality (WLOG), assume for the rest of the proof that $d_1=\min\clbr{d_1,d_2}$ and $d_2=\max\clbr{d_1,d_2}$. It is easy to verify that the determinant of $g$ in \eqref{8b} is $\Delta = \alpha^2 + \beta^2 - 6 \alpha \beta$ and that $\Delta=0$ if and only if $\alpha = \rndbr{3 - 2\sqrt{2}}\beta$ or $\alpha = \rndbr{3+2\sqrt{2}}\beta$. By \eqref{8a} and properties of quadratic functions, we see that $f$ has two other real roots $\lambda_{3}$ and $\lambda_{4}$ when $\alpha \leq d_1$ or $\alpha \geq d_2$ but no more real roots when $d_1 < \alpha < d_2$. Since $f\rndbr{\lambda}$ can not be written as $\prod_j \rndbr{\lambda - \lambda_j}$ for reals $\lambda_j$ when $d_1 < \alpha < d_2$, $\mathbf{M}$ can not be diagonalized in the vector space $\mathbb{R}^{2n}$ over $\mathbb{R}$.

Finally, we derive the Jordan blocks corresponding to each $\lambda_{i}$ for $i=1,\ldots,4$ when $\alpha \leq d_1$ or $\alpha \geq d_2$. In this case, $\alpha \neq \beta$, $f\rndbr{\lambda}$ has the form $\prod_j \rndbr{\lambda - \lambda_j}$ for real $\lambda_j$ and $\mathbf{M}$ can potentially be diagonalized. For $\lambda_{1}=1-\alpha$, we have
\begin{equation}
\mathbf{M}-\lambda_{1}\mathbf{I}_{2n}=\left(
\begin{array}
[c]{cc}
\mathbf{0} & \alpha\mathbf{1}_{n}^{T}\mathbf{a}\\
-\beta\mathbf{1}_{n}^{T}\mathbf{b} & \left(  \alpha-\beta\right)
\mathbf{I}_{n}
\end{array}
\right). \label{9}
\end{equation}
So, when $\alpha \neq \beta$, we have $\mathsf{rank}\left(  \mathbf{M}-\lambda_{1}\mathbf{I}_{2n}\right)  =n+1$ and
\[
\rho_{\mathbf{M}}\left(  \lambda_{1}\right)  =2n-\mathsf{rank}\left(
\mathbf{M}-\lambda_{1}\mathbf{I}_{2n}\right)  =n-1,
\]
where $\rho_{\mathbf{A}}\left(  \lambda\right)  $ denotes the dimension of the
kernel space of $\mathbf{A}-\lambda\mathbf{I}_{s}$ for a square matrix
$\mathbf{A}\in\mathcal{M}_{s}$ and $\lambda\in\mathbb{R}$ as a linear mapping
$\boldsymbol{v}\mapsto\left(  \mathbf{A}-\lambda\mathbf{I}_{s}\right)
\boldsymbol{v}$ for a column vector $\boldsymbol{v}\in\mathbb{R}^{s}$.

For $\lambda _{2}=1-\beta $, we have
\begin{equation}
\mathbf{M}-\lambda _{2}\mathbf{I}_{2n}=\left(
\begin{array}{cc}
\left( \beta -\alpha \right) \mathbf{I}_{n} & \alpha \mathbf{1}_{n}^{T}\mathbf{a} \\
-\beta \mathbf{1}_{n}^{T}\mathbf{b} & \mathbf{0}
\end{array}
\right).   \label{11}
\end{equation}
So, when $\alpha \neq \beta$, $\mathsf{rank}\left( \mathbf{M}-\lambda _{2}\mathbf{I}_{2n}\right) =n+1$
and
\begin{equation*}
\rho _{\mathbf{M}}\left( \lambda _{2}\right) =2n-\mathsf{rank}\left( \mathbf{
M-}\lambda _{2}\mathbf{I}_{2n}\right) =n-1.
\end{equation*}

For $\lambda_{3}$ and $\lambda_{4}$ when $\alpha < d_1$ or $\alpha > d_2$,
we immediately see that the Jordan blocks corresponding to them are respectively
$\mathbf{J}_{1}\left(  \lambda_{3}\right)  =\lambda_{3}$ and $\mathbf{J}
_{1}\left(  \lambda_{4}\right)  =\lambda_{4}$ since each of $\lambda_{3}$ and
$\lambda_{4}$ is a simple root of $f\left(  \lambda\right)$. Thus, $\sum_{i=1}^{4}\rho _{\mathbf{M}}\left( \lambda _{i}\right) =2n$ and there
is a nonsingular matrix $\mathbf{Q}$ such that (\ref{4}) holds.

However, for $\lambda_{3}$ and $\lambda_{4}$ when $\alpha = d_1$ or $\alpha = d_2$,
$\lambda_{3}=1-2^{-1}\left( \alpha +\beta \right)$ becomes a double root and
\begin{equation*}
\mathbf{M}-\lambda _{3}\mathbf{I}_{2n}=\left(
\begin{array}{cc}
2^{-1}\left( \beta -\alpha \right) \mathbf{I}_{n} & \alpha \mathbf{1}_{n}^{T}\mathbf{a} \\
-\beta \mathbf{1}_{n}^{T}\mathbf{b} & -2^{-1}\left( \beta -\alpha \right) \mathbf{I}_{n}
\end{array}
\right).
\end{equation*}
In order to decide $\rho _{\mathbf{M}}\left( \lambda _{3}\right)$, the rank $r_{\lambda_3}$ of $\mathbf{M}-\lambda _{3}\mathbf{I}_{2n}$ needs to be obtained. From \eqref{eq:8c}, we know that $r_{\lambda_3}$ is that of
\begin{equation*}
\mathbf{M}_{\lambda _{3}}=\left(
\begin{array}{cc}
\frac{\alpha -\beta }{2}\mathbf{I}_{n} & -\alpha \mathbf{1}_{n}^{T}\mathbf{a}
\\
\mathbf{0} & \frac{\beta -\alpha }{2}\mathbf{I}_{n}-\frac{2\alpha \beta }{
\beta -\alpha }\mathbf{1}_{n}^{T}\mathbf{a}
\end{array}
\right).
\end{equation*}
Set $\mathbf{B}=\frac{\beta -\alpha }{2}\mathbf{I}_{n}-\frac{2\alpha \beta
}{\beta -\alpha }\mathbf{1}_{n}^{T}\mathbf{a}$. Then $\left\vert \mathbf{B}\right\vert = 0$,
i.e., $\mathsf{rank}\left(\mathbf{B}\right) <n$ since $\mathbf{a}\mathbf{1}_{n}^{T}=1$
 and $\alpha =d_{1}$ or $d_{2}$ implies $\frac{\beta -\alpha }{2}=\frac{2\alpha \beta
}{\beta -\alpha }$.  So, it suffices to
get the rank of $\mathbf{B}$ to obtain $r_{\lambda_3}$. Let $\mathbf{a}_{\left( -1\right) }$ be the
vector obtained by removing one entry from $\mathbf{a}$ and $\mathbf{B}%
_{n-1}=\frac{\beta -\alpha }{2}\mathbf{I}_{n-1}-\frac{2\alpha \beta }{\beta
-\alpha }\mathbf{1}_{n-1}^{T}\mathbf{a}_{\left( -1\right) }$. Then
\begin{eqnarray*}
\left\vert \mathbf{B}_{n-1}\right\vert  &=&\left\vert \frac{\beta -\alpha }{2%
}\left( \mathbf{I}_{n-1}-\frac{4\alpha \beta }{\left( \beta -\alpha \right)
^{2}}\mathbf{1}_{n-1}^{T}\mathbf{a}_{\left( -1\right) }\right) \right\vert
\\
&=&\left( \frac{\beta -\alpha }{2}\right) ^{n-1}\left( 1-\mathbf{a}_{\left(
-1\right) }\mathbf{1}_{n-1}^{T}\right) \neq 0
\end{eqnarray*}
by the definition of $\mathbf{a}$. Therefore, $\mathsf{rank}\left( \mathbf{B}
\right) =n-1$ and $\rho _{\mathbf{M}}\left( \lambda _{3}\right) =2n-\mathsf{
rank}\left( \mathbf{M-}\lambda _{3}\mathbf{I}_{2n}\right) =1$. This implies
that the Jordan block corresponding to $\lambda_3$ is $\mathbf{J}_{2}\left( \lambda _{3}\right)$ and
that $\sum_{i=1}^{3}\rho _{\mathbf{M}}\left( \lambda _{i}\right) =2n-1$. Therefore,
$\mathbf{M}$ can not be diagonalized in the vector space $\mathbb{R}^{2n}$ over $\mathbb{R}$.
However, there exists a nonsingular $\mathbf{Q}\in \mathcal{M}_{2n}$ such that
\begin{equation*}
\mathbf{J=Q}^{-1}\mathbf{MQ}=\mathsf{diag}\left\{ \lambda _{1}\mathbf{I}
_{n-1},\lambda _{2}\mathbf{I}_{n-1},\mathbf{J}_{2}\left( \lambda _{3}\right)
\right\},
\end{equation*}
which justifies \eqref{4a}. This completes the proof.
\end{proof}

\subsection{Explicit Form of The Matrix of Basis}

\autoref{Thm:JordanForm} provides an eigen decomposition of $\mathbf{M}$. However, it does not show what the matrix of basis
$\mathbf{Q}$ is. In what follows, we will only provide explicitly $\mathbf{Q}$ for the second case in \autoref{Thm:JordanForm} for which $\bm{M}$ can be diagonalized, since this case makes $\clbr{\boldsymbol{z}_{t}}_{t\in \mathbb{Z}_{+}}$ the most amenable to an econometric analysis of its long term behavior. For an integer $s>1$, let $\mathbf{e}_{i,s}\in\mathbb{R}^{s}$ be such that only
the $i$th entry of $\mathbf{e}_{i,s}$ is $1$ but others are all zero, and for
$\mathcal{B}\subseteq\mathbb{R}^{s}$ let $\mathsf{span}\left(  \mathcal{B}\right)  $ be the smallest linear space containing $\mathcal{B}$. We have:

\begin{theorem}
\label{Thm:Basis} Suppose $\alpha < d_1$ or $\alpha > d_2$ such that \eqref{4} holds, then the matrix $\mathbf{Q}$ in \eqref{4} is given by
\begin{equation}
\mathbf{Q=}\left(
\begin{array}
[c]{cccc}
\mathbf{W}_{\lambda_{1}}^{T} & \mathbf{W}_{\lambda_{3}}^{T} & \mathbf{W}_{\lambda_{2}}^{T} & \mathbf{W}_{\lambda_{4}}^{T}
\end{array}
\right)  , \label{14}
\end{equation}
where, for $1\leq i\leq n-1$,
\begin{equation}
\left\{
\begin{array}{l}
\mathbf{W}_{\lambda _{1}}^{T}=\left( \boldsymbol{\varepsilon }
_{1}^{T},\ldots ,\boldsymbol{\varepsilon }_{n-1}^{T}\right) \text{ with }
\boldsymbol{\varepsilon }_{i}=\left( -b_{1}^{-1}b_{i+1},\mathbf{e}_{i,n-1},
\mathbf{0}\right) \\
\mathbf{W}_{\lambda _{2}}^{T}=\left( \boldsymbol{\tilde{\varepsilon}}
_{1}^{T},\ldots ,\boldsymbol{\tilde{\varepsilon}}_{n-1}^{T}\right) \text{
with }\boldsymbol{\tilde{\varepsilon}}_{i}=\left( \mathbf{0}
,-a_{1}^{-1}a_{i+1},\mathbf{e}_{i,n-1}\right) \\
\mathbf{W}_{\lambda _{3}}^{T}=\left( \mathbf{1}_{n},-2\alpha \left(
\beta -\alpha -\sqrt{\Delta }\right)^{-1} \mathbf{1}_{n}\right) ^{T} \\
\mathbf{W}_{\lambda _{4}}^{T}=\left( \mathbf{1}_{n},-2\alpha \left(
\beta -\alpha +\sqrt{\Delta }\right)^{-1} \mathbf{1}_{n}\right) ^{T}.
\end{array}
\right.   \label{15}
\end{equation}

\end{theorem}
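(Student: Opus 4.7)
The plan is to verify directly that each of the four column-blocks of $\mathbf{Q}$ in (\ref{14})--(\ref{15}) consists of eigenvectors of $\mathbf{M}$ attached to the matching eigenvalue from \autoref{Thm:JordanForm}, and then to observe that $\mathbf{Q}$ is nonsingular. Since \autoref{Thm:JordanForm} already gives geometric multiplicities $n-1,\ 1,\ n-1,\ 1$ summing to $2n$, producing $2n$ independent eigenvectors in the right positions forces $\mathbf{M}\mathbf{Q}=\mathbf{Q}\mathbf{J}$, which is (\ref{4}).

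For $\lambda_1=1-\alpha$ I start from the block form (\ref{9}) of $\mathbf{M}-\lambda_1\mathbf{I}_{2n}$. A column $(\mathbf{v}_1^T,\mathbf{v}_2^T)^T$ lies in its kernel iff $(\mathbf{a}\mathbf{v}_2)\mathbf{1}_n^T=\mathbf{0}$ and $(\alpha-\beta)\mathbf{v}_2 = \beta(\mathbf{b}\mathbf{v}_1)\mathbf{1}_n^T$. Using $\alpha\neq\beta$ (which follows from the standing hypothesis), the second identity forces $\mathbf{v}_2$ to be proportional to $\mathbf{1}_n^T$; substituting back via $\mathbf{a}\mathbf{1}_n^T=1$ yields $\mathbf{v}_2=\mathbf{0}$ and $\mathbf{b}\mathbf{v}_1=0$. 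Each $\boldsymbol{\varepsilon}_i$ from (\ref{15}) manifestly has vanishing second block and satisfies $\mathbf{b}(-b_1^{-1}b_{i+1},\mathbf{e}_{i,n-1})^T=-b_{i+1}+b_{i+1}=0$, and the $n-1$ of them are independent because their $\mathbf{e}_{i,n-1}$ coordinates are. The analogous argument for $\lambda_2=1-\beta$, starting from (\ref{11}) and interchanging the roles of $(\mathbf{a},\mathbf{b})$, produces the $\tilde{\boldsymbol{\varepsilon}}_i$.

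For the simple eigenvalues $\lambda_3,\lambda_4$ each eigenspace is one-dimensional, so it suffices to exhibit one nonzero eigenvector in each. I try the symmetric ansatz $\mathbf{v}=(c_1\mathbf{1}_n^T,c_2\mathbf{1}_n^T)^T$. Using $\mathbf{a}\mathbf{1}_n^T=\mathbf{b}\mathbf{1}_n^T=1$, the eigen-equation $\mathbf{M}\mathbf{v}=\lambda\mathbf{v}$ collapses to the $2\times 2$ scalar system
\[
(1-\alpha-\lambda)c_1+\alpha c_2=0,\qquad -\beta c_1+(1-\beta-\lambda)c_2=0,
\]
whose determinant is precisely $g(\lambda)$ of (\ref{8b}) and is hence rank one for $\lambda\in\{\lambda_3,\lambda_4\}$. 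Setting $c_1=1$ and solving either equation gives $c_2$ in terms of $\lambda$; conjugate-rationalization using the identity $(\beta-\alpha)^2-\Delta=4\alpha\beta$ together with the explicit formulas $\lambda_{3,4}-1+\alpha=(\alpha-\beta\pm\sqrt{\Delta})/2$ rewrites the result in the exact surd form displayed for $\mathbf{W}_{\lambda_3}^T$ and $\mathbf{W}_{\lambda_4}^T$ in (\ref{15}).

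Finally, $\mathbf{Q}$ is nonsingular: its four column-groups sit in the eigenspaces of the four pairwise distinct eigenvalues $\lambda_1,\ldots,\lambda_4$ (distinctness follows from $\alpha\neq\beta$ and $\Delta>0$ under the standing hypothesis), so columns from different groups are automatically independent, while independence within each group was already established. Arranging them in the order prescribed by (\ref{4}) yields $\mathbf{M}\mathbf{Q}=\mathbf{Q}\mathbf{J}$. The main obstacle is the last piece of algebra in the third step: matching the computed $c_2$ to the precise expression $-2\alpha(\beta-\alpha\mp\sqrt{\Delta})^{-1}$ requires tracking the conjugate-rationalization carefully, while the $\lambda_1,\lambda_2$ analyses and the independence argument are essentially block-matrix bookkeeping.
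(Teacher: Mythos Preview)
Your overall strategy coincides with the paper's: solve $(\mathbf{M}-\lambda_i\mathbf{I}_{2n})\mathbf{x}^T=\mathbf{0}$ for each eigenvalue and assemble the bases into $\mathbf{Q}$. Your kernel arguments for $\lambda_1,\lambda_2$ are essentially the paper's (you phrase the contradiction via $\mathbf{a}\mathbf{1}_n^T=1$ where the paper phrases it via signs, which is if anything cleaner), and your direct ansatz $(c_1\mathbf{1}_n,c_2\mathbf{1}_n)$ for $\lambda_3,\lambda_4$ is a neat shortcut to the paper's row-reduction through $\mathbf{T}_2$.

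There is, however, a genuine gap in your last step, and it is not just a matter of tracking the algebra carefully. Solving your $2\times 2$ system at $\lambda=\lambda_3$ with $c_1=1$ yields
\[
c_2=\frac{\lambda_3-1+\alpha}{\alpha}=\frac{\alpha-\beta+\sqrt{\Delta}}{2\alpha}=-\frac{\beta-\alpha-\sqrt{\Delta}}{2\alpha},
\]
whereas the displayed $\mathbf{W}_{\lambda_3}^T$ in \eqref{15} has second-block coefficient $-2\alpha(\beta-\alpha-\sqrt{\Delta})^{-1}$. These two numbers are \emph{reciprocals} of each other, and no conjugate-rationalization via $(\beta-\alpha)^2-\Delta=4\alpha\beta$ turns one into the other; a numerical check (e.g.\ $\alpha=0.1$, $\beta=1$, giving $c_2=-1.3$ versus $-2\alpha/(\beta-\alpha-\sqrt{\Delta})\approx -0.77$) confirms they differ. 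The paper's own derivation contains the same slip: equation \eqref{20} correctly gives $\mathbf{x}_1=-\alpha\tau^{-1}\mathbf{x}_2$, but the eigenvector then recorded in \eqref{21} and \eqref{15} has $\mathbf{x}_2=-\alpha\tau^{-1}\mathbf{x}_1$, i.e.\ the relation inverted. Your computation is correct; what needs correcting is the stated form of $\mathbf{W}_{\lambda_{3,4}}^T$, which should read $\bigl(\mathbf{1}_n,\ -(\beta-\alpha\mp\sqrt{\Delta})(2\alpha)^{-1}\mathbf{1}_n\bigr)^T$.
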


\begin{proof}
Recall that $\mathbf{MQ=QJ}$, where%
\[
\mathbf{J}=\mathsf{diag}\left\{  \lambda_{1},\ldots,\lambda_{1},\lambda_{3},\lambda
_{2},\ldots,\lambda_{2},\lambda_{4}\right\}
\]
as in (\ref{4}), and $\lambda _{1}=1-\alpha $, $\lambda _{2}=1-\beta $,
$\lambda _{3,4}=2^{-1}\left( 2-\alpha -\beta \pm \sqrt{\Delta }\right) $. We will
find $\mathbf{Q}$ using the equations $\mathbf{MQ=QJ}$ and (\ref{4}) for each
$\lambda_{i}$, $i=1,\ldots,4$. To this end, let $\mathbf{x}=\left(
\mathbf{x}_{1},\mathbf{x}_{2}\right)  \in\mathbb{R}^{2n}$ with $\mathbf{x}
_{1}=\left(  x_{1},\ldots,x_{n}\right)  $ and $\mathbf{x}_{2}=\left(
x_{n+1},\ldots,x_{2n}\right)  $.

For $\lambda_{1}=1-\alpha$, we have
\[
\mathbf{M}-\lambda_{1}\mathbf{I}_{2n}=\left(
\begin{array}
[c]{cc}
\mathbf{0} & \alpha\mathbf{1}_{n}^{T}\mathbf{a}\\
-\beta\mathbf{1}_{n}^{T}\mathbf{b} & \left(  \alpha-\beta\right)\mathbf{I}_{n}
\end{array}
\right)
\]
and that $\left(  \mathbf{M}-\lambda_{1}\mathbf{I}_{2n}\right)  \mathbf{x}^{T}=\mathbf{0}$ if and only if
\begin{equation}
\mathbf{ax}_{2}^{T}=0\text{ and }\mathbf{x}_{2}^{T}=\dfrac{\beta
\mathbf{bx}_{1}^{T}}{\alpha-\beta}\mathbf{1}_{n}^{T}. \label{12}%
\end{equation}
Since $\mathbf{a}\in\mathcal{C}_{n}$ and $\mathbf{ax}_{2}^{T}=0$, it is clear
that when $\mathbf{x}_{2}^{T}\neq\mathbf{0}$ some component of $\mathbf{x}%
_{2}$ have to be positive and some negative. However, since the sign of
$\frac{\beta\mathbf{bx}^{T}}{\alpha-\beta}$ is fixed, the
second identity in (\ref{12}) thus forces the components of $\mathbf{x}_{2}$
to have the same sign. Therefore, (\ref{12}) holds if and only if
$\mathbf{x}_{2}^{T}=\mathbf{0}$, and this gives $\mathbf{bx}_{1}^{T}=0$. In
other words,
\begin{equation}
\mathbf{\tilde{W}}_{\lambda_{1}}=\ker\left(  \mathbf{M}-\lambda_{1}
\mathbf{I}_{2n}\right)  =\left\{  \mathbf{x}\in\mathbb{R}^{2n}:\mathbf{bx}
_{1}^{T}=0,\mathbf{x}_{2}^{T}=\mathbf{0}\right\}  \text{,} \label{25}
\end{equation}
where $\ker\left(  \mathbf{A}\right)  $ denotes the kernel space of a square
matrix $\mathbf{A}\in\mathcal{M}_{s}$ as a linear mapping $\boldsymbol{v}
\mapsto\mathbf{A}\boldsymbol{v}$ for a column vector $\boldsymbol{v}\in\mathbb{R}^{s}$. Since
$\dim\left(  \mathbf{\tilde{W}}_{\lambda_{1}}\right)  =n-1$, we see that the
eigenspace corresponding to $\lambda_{1}$ is $\mathbf{\tilde{W}}_{\lambda_{1}
}$. Further, it is easy to verify that $\boldsymbol{\varepsilon}_{i}=\left(
-b_{1}^{-1}b_{i+1},\mathbf{e}_{i,n-1},\mathbf{0}\right)  \in\mathbb{R}^{2n}$
for $1\leq i\leq n-1$ is a basis for $\mathbf{\tilde{W}}_{\lambda_{1}}$.

For $\lambda_{2}=1-\beta$, we have
\[
\mathbf{M}-\lambda_{2}\mathbf{I}_{2n}=\left(
\begin{array}
[c]{cc}%
\left(  \beta-\alpha\right)  \mathbf{I}_{n} & \alpha\mathbf{1}_{n}^{T}\mathbf{a}\\
-\beta\mathbf{1}_{n}^{T}\mathbf{b} & \mathbf{0}
\end{array}
\right)  .
\]
So, $\left(  \mathbf{M}-\lambda_{2}\mathbf{I}_{2n}\right)  \mathbf{x}^{T}=\mathbf{0}$ if and only if
\begin{equation}
\mathbf{bx}_{1}^{T}=0\text{ and }\mathbf{x}_{1}^{T}=\dfrac{\alpha
\mathbf{ax}_{2}^{T}}{\alpha-\beta}\mathbf{1}_{n}^{T}. \label{13}
\end{equation}
By the same reasoning for the case of $\lambda_{1}$, we see that
\begin{equation}
\mathbf{\tilde{W}}_{\lambda_{2}}=\ker\left(  \mathbf{M}-\lambda_{2}
\mathbf{I}_{2n}\right)  =\left\{  \mathbf{x}\in\mathbb{R}^{2n}:\mathbf{ax}
_{2}^{T}=0,\mathbf{x}_{1}^{T}=\mathbf{0}\right\}  \label{26}
\end{equation}
with $\dim\left(  \mathbf{\tilde{W}}_{\lambda_{1}}\right)  =n-1$ is the eigenspace
corresponding to $\lambda_{3}$. Further, it is easy to verify that
$\boldsymbol{\tilde{\varepsilon}}_{i}=\left(  \mathbf{0},-a_{1}^{-1}%
a_{i+1},\mathbf{e}_{i,n-1}\right)  \in\mathbb{R}^{2n}$ for $1\leq i\leq n-1$
is a basis for $\mathbf{\tilde{W}}_{\lambda_{2}}$.

For $\lambda _{3,4}=1-2^{-1}\left( \alpha +\beta \right) \pm 2^{-1}\sqrt{\Delta }$, we have
\begin{equation*}
\mathbf{M}-\lambda _{3,4}\mathbf{I}_{2n}=\left(
\begin{array}{cc}
\tau \mathbf{I}_{n} & \alpha \mathbf{1}_{n}^{T}\mathbf{a} \\
-\beta \mathbf{1}_{n}^{T}\mathbf{b} & \delta \mathbf{I}_{n}
\end{array}
\right) \text{,}
\end{equation*}
where $\tau =\frac{\beta -\alpha }{2}\mp \frac{\sqrt{\Delta }}{2}$ and $
\delta =\frac{\alpha -\beta }{2}\mp \frac{\sqrt{\Delta }}{2}$. Let
\begin{equation*}
\mathbf{T}_{2}=\left(
\begin{array}{cc}
\mathbf{I}_{n} & \mathbf{0} \\
\tau ^{-1}\beta \mathbf{1}_{n}^{T}\mathbf{b} & \mathbf{I}_{n}%
\end{array}
\right)
\end{equation*}
and $\mathbf{\tilde{M}}_{2}=\mathbf{T}_{2}\left( \mathbf{M}-\lambda _{3,4}\mathbf{I}_{2n}\right) $. Then
\begin{equation*}
\mathbf{\tilde{M}}_{2}=\left(
\begin{array}{cc}
\mathbf{I}_{n} & \mathbf{0} \\
\tau ^{-1}\beta \mathbf{1}_{n}^{T}\mathbf{b} & \mathbf{I}_{n}
\end{array}%
\right) \left(
\begin{array}{cc}
\tau \mathbf{I}_{n} & \alpha \mathbf{1}_{n}^{T}\mathbf{a} \\
-\beta \mathbf{1}_{n}^{T}\mathbf{b} & \delta \mathbf{I}_{n}
\end{array}
\right) =\left(
\begin{array}{cc}
\tau \mathbf{I}_{n} & \alpha \mathbf{1}_{n}^{T}\mathbf{a} \\
\mathbf{0} & \tau ^{-1}\beta \alpha \mathbf{1}_{n}^{T}\mathbf{a}+\delta\mathbf{I}_{n}
\end{array}%
\right) \text{,}
\end{equation*}
and $\ker \left( \mathbf{M}-\lambda _{3,4}\mathbf{I}_{2n}\right) =\ker
\left( \mathbf{\tilde{M}}_{2}\right) $ since $\left\vert \mathbf{T}_{2}\right\vert =1$. Obviously, $\mathbf{\tilde{M}}_{2}\mathbf{x}^{T}=\mathbf{0}$ if and only if
\begin{equation}
\tau \mathbf{x}_{1}^{T}+\alpha \mathbf{1}_{n}^{T}\mathbf{ax}_{2}^{T}=\mathbf{
0}\text{\ \ \ and \ \ }\left( \tau ^{-1}\delta ^{-1}\beta \alpha \mathbf{1}_{n}^{T}
\mathbf{a}+\mathbf{I}_{n}\right) \mathbf{x}_{2}^{T}=\mathbf{0}\text{.}
\label{19}
\end{equation}
Since
\begin{equation*}
\tau \delta =-4^{-1}\left( \beta -\alpha \mp \sqrt{\Delta }\right) \left(
\beta -\alpha \pm \sqrt{\Delta }\right) =-4^{-1}\left[ \left( \beta -\alpha
\right) ^{2}-\Delta \right] =-\alpha \beta
\end{equation*}
and $\tau ^{-1}\delta ^{-1}\beta \alpha =-1$, the second identity in \eqref{19} becomes
\begin{equation}
\left( \mathbf{1}_{n}^{T}\mathbf{a}-\mathbf{I}_{n}\right) \mathbf{x}_{2}^{T}=%
\mathbf{0}  \label{31}
\end{equation}
Since the matrix $\mathbf{a1}_{n}^{T}=1\mathbf{\ }$has the only eigenvalue $1
$ whose corresponding eigenvector is $1$, the general solution to (\ref
{31}) is $\mathbf{x}_{2}^{T}=c\mathbf{1}_{n}^{T}$ for some $c\in \mathbb{R}$.
Let $\mathbf{R}=\mathbf{I}_{n-1}-\mathbf{1}_{n-1}^{T}\mathbf{a}_{\left( -n\right)
}$, where $\mathbf{a}_{\left( -n\right) }=\left( a_{1},\ldots
,a_{n-1}\right) $. Then $\left\vert \mathbf{R}\right\vert
=1-\sum_{i=1}^{n-1}a_{i}>0$ since $\mathbf{a}\in \mathcal{C}_{n}$ and $%
\mathsf{rank}\left( \mathbf{R}\right) =n-1$. So, $\ker \left( \mathbf{1}%
_{n}^{T}\mathbf{a}-\mathbf{I}_{n}\right) =\mathsf{span}\left( \left\{
\mathbf{1}_{n}\right\} \right) $,
and the general solution to $\left( \mathbf{M}-\lambda _{3,4}\mathbf{I}_{2n}\right) \mathbf{x}^{T}=\mathbf{0}$ is
\begin{equation}
\mathbf{x}_{1}=-\alpha \tau ^{-1}\mathbf{x}_{2}\text{ and }\mathbf{x}_{2}=c\mathbf{1}_{n}\text{.}  \label{20}
\end{equation}
Therefore, the solution space to $\left( \mathbf{M}-\lambda _{3}\mathbf{I}
_{2n}\right) \mathbf{x}^{T}=\mathbf{0}$, i.e., that when $\tau =\frac{\beta
-\alpha -\sqrt{\Delta }}{2}$ and $\delta =\frac{\alpha -\beta -\sqrt{\Delta }}{2}$,
is
\begin{equation}
\mathbf{\tilde{W}}_{\lambda _{3}}=\mathsf{span}\left( \left\{ \mathbf{x}\in
\mathbb{R}^{2n}:\mathbf{x}=\left( \mathbf{1}_{n},-2\alpha \left( \beta
-\alpha -\sqrt{\Delta }\right)^{-1} \mathbf{1}_{n}\right) \right\} \right) \text{.}  \label{21}
\end{equation}
Further, the solution space to $\left( \mathbf{M}-\lambda _{4}\mathbf{I}_{2n}\right) \mathbf{x}^{T}=\mathbf{0}$, i.e., that when $\tau =\frac{\beta-\alpha +\sqrt{\Delta }}{2}$ and $\delta =\frac{\alpha -\beta +\sqrt{\Delta }}{2}$, is
\begin{equation}
\mathbf{\tilde{W}}_{\lambda _{4}}=\mathsf{span}\left( \left\{ \mathbf{x}\in
\mathbb{R}^{2n}:\mathbf{x}=\left( \mathbf{1}_{n},-2\alpha \left( \beta
-\alpha +\sqrt{\Delta }\right)^{-1} \mathbf{1}_{n}\right) \right\} \right) \text{.}  \label{32}
\end{equation}

Combining the solutions to equations $\left(  \mathbf{M}-\lambda_{i}%
\mathbf{I}_{2n}\right)  \mathbf{x}^{T}=\mathbf{0}$ for $1\leq i\leq4$, we see
that (\ref{14}) holds with (\ref{15}). This completes the proof.
\end{proof}

\subsection{Inverse of The Matrix for Basis}

Next we derive the inverse $\mathbf{Q}^{-1}$ of $\mathbf{Q}$ so that the full, explicit
decomposition of $\mathbf{M}$ will be available. Even though it is difficult in general to find explicitly
 the inverse of a large-dimensional matrix, i.e., when $n$ is large or equivalently there are many growth rates involved in model \eqref{3}, the inverse $\mathbf{Q}^{-1}$ terms out to be very simple (see \autoref{Thm:BasisExplicit}) due to the fact that the weights $\mathbf{a}$ and $\mathbf{b}$ both represent convex combinations and lie in the simplex $\mathcal{C}_{n}$.

In order to state the result, we introduce some notations. Let $\tau _{-}=2\left( \beta -\alpha -\sqrt{
\Delta }\right)^{-1} $, $\tau _{+}=2\left( \beta -\alpha +\sqrt{\Delta }
\right)^{-1} $, $\mathbf{a}_{\left( -1\right) }=\left( a_{2},\ldots ,a_{n}\right)
$ and $\mathbf{b}_{\left( -1\right) }=\left( b_{2},\ldots ,b_{n}\right) $.
Recall (\ref{14}) and (\ref{15}).
Then $\mathbf{Q}$ can be written into a $4 \times 4$ block matrix as
\begin{equation}
\mathbf{Q}=\left(
\begin{array}{cc}
\begin{array}{cc}
-b_{1}^{-1}\mathbf{b}_{\left( -1\right) } & 1 \\
\mathbf{I}_{n-1} & \mathbf{1}_{n-1}^{T}
\end{array}
&
\begin{array}{cc}
\mathbf{0}_{n\times \left( n-1\right) } & \text{ \ }\mathbf{1}_{n}^{T}
\end{array}
\\
\begin{array}{cc}
\mathbf{0}_{n\times \left( n-1\right) } & -\alpha \tau _{-}\mathbf{1}_{n}^{T}
\end{array}
&
\begin{array}{cc}
-a_{1}^{-1}\mathbf{a}_{\left( -1\right) } & -\alpha \tau _{+} \\
\mathbf{I}_{n-1} & -\alpha \tau _{+}\mathbf{1}_{n-1}^{T}
\end{array}
\end{array}
\right) ,  \label{24}
\end{equation}
where $\mathbf{0}_{s\times s^{\prime}}\in\mathcal{M}_{s\times s^{\prime}}$ has
all entries as zero. Further, for integers $i$ and $j$ such that $1\leq i\leq
j\leq2n$, let $\mathbf{E}_{ij}\in\mathcal{M}_{2n}$ be such that its $ij$th
entry is $1$ and other entries are identically zero, and for $\hat{c}
\in\mathbb{R}$ let $\mathbf{P}_{i,j}\left(  \hat{c}\right)  =\mathbf{I}_{2n}+\hat{c}\mathbf{E}_{i,j}$.
Note that for any matrix $\mathbf{A}=\left(  \tilde{a}_{ij}\right)
\in\mathcal{M}_{2n}$ the $j$th column of $\mathbf{AE}_{ij}$ is the $i$th
column of $\mathbf{A}$ and all other entries of $\mathbf{AE}_{ij}$ are zero.

\begin{theorem}
\label{Thm:BasisExplicit}Under the conditions of \autoref{Thm:Basis}, the
inverse $\mathbf{Q}^{-1}$ of $\mathbf{Q}$ is
\begin{equation}
\mathbf{Q}^{-1}=\mathbf{P}_{n,2n}\left( -1\right) \mathbf{P}_{2n,n}\mathbf{P}
_{2n,n}\left( \tilde{\tau}^{-1}\alpha \tau _{-}\right) \mathsf{diag}\left\{
\mathbf{Q}_{21}^{-1},\mathbf{Q}_{22}^{-1}\right\} \text{,}  \label{39}
\end{equation}
where $\tilde{\tau}=\alpha \left( \tau _{-}-\tau _{+}\right) $,
\begin{equation}
\mathbf{Q}_{21}^{-1}=\left(
\begin{array}{cc}
-b_{1}\mathbf{1}_{n-1}^{T} & \mathbf{I}_{n-1}-\mathbf{1}_{n-1}^{T}\mathbf{b}
_{\left( -1\right) } \\
b_{1} & \mathbf{b}_{\left( -1\right) }
\end{array}
\right)   \label{33}
\end{equation}
and
\begin{equation}
\mathbf{Q}_{22}^{-1}=\left(
\begin{array}{cc}
-a_{1}\mathbf{1}_{n-1}^{T} & \mathbf{I}_{n-1}-\mathbf{1}_{n-1}^{T}\mathbf{a}
_{\left( -1\right) } \\
\tilde{\tau}^{-1}a_{1} & \tilde{\tau}^{-1}\mathbf{a}_{\left( -1\right) }
\end{array}
\right) \text{.}  \label{35}
\end{equation}
\end{theorem}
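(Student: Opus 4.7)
The plan is to right-multiply $\mathbf{Q}$ by a short product of elementary matrices until it becomes a block-diagonal matrix with two $n\times n$ diagonal blocks that admit tractable inverses, and then to invert the whole factorization. The key structural observation is that, among the $2n$ columns of $\mathbf{Q}$ displayed in \eqref{24}, only the two columns corresponding to $\lambda_{3}$ (column $n$) and $\lambda_{4}$ (column $2n$) carry nonzero entries in both the top half (rows $1,\ldots,n$) and the bottom half (rows $n+1,\ldots,2n$); every other column lives inside exactly one of the two halves, by the structure of $\mathbf{W}_{\lambda_{1}}^{T}$ and $\mathbf{W}_{\lambda_{2}}^{T}$. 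So a small number of right-multiplications acting only on columns $n$ and $2n$ should decouple the halves.

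Concretely, I would first right-multiply $\mathbf{Q}$ by $\mathbf{P}_{n,2n}(-1)$, which subtracts column $n$ from column $2n$. Since the top halves of $\mathbf{W}_{\lambda_{3}}^{T}$ and $\mathbf{W}_{\lambda_{4}}^{T}$ both equal $\mathbf{1}_{n}$, this zeroes out the top half of column $2n$, while its bottom becomes $-\alpha\tau_{+}\mathbf{1}_{n}-(-\alpha\tau_{-}\mathbf{1}_{n})=\alpha(\tau_{-}-\tau_{+})\mathbf{1}_{n}=\tilde{\tau}\mathbf{1}_{n}$. Next I would right-multiply by $\mathbf{P}_{2n,n}(\tilde{\tau}^{-1}\alpha\tau_{-})$, adding $\tilde{\tau}^{-1}\alpha\tau_{-}$ times the updated column $2n$ to column $n$; this cancels the bottom $-\alpha\tau_{-}\mathbf{1}_{n}$ of column $n$ while leaving its top $\mathbf{1}_{n}$ intact. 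The net effect is to exhibit $\mathbf{Q}$ in block-diagonal form: $\mathbf{Q}\cdot\mathbf{P}_{n,2n}(-1)\cdot\mathbf{P}_{2n,n}(\tilde{\tau}^{-1}\alpha\tau_{-})=\mathsf{diag}\{\mathbf{Q}_{21},\mathbf{Q}_{22}\}$, where $\mathbf{Q}_{21}$ is assembled from the first $n-1$ columns of $\mathbf{W}_{\lambda_{1}}^{T}$ joined with the surviving top piece $\mathbf{1}_{n}$, and $\mathbf{Q}_{22}$ analogously from $\mathbf{W}_{\lambda_{2}}^{T}$ together with the surviving bottom piece $\tilde{\tau}\mathbf{1}_{n}$.

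It remains to invert each diagonal block and assemble. The cleanest route is to take the expressions \eqref{33} and \eqref{35} as ansätze and verify $\mathbf{Q}_{2k}^{-1}\mathbf{Q}_{2k}=\mathbf{I}_{n}$ for $k=1,2$ by direct block multiplication; the key simplifications making the off-diagonal blocks of the product vanish are the simplex identities $\mathbf{b}_{(-1)}\mathbf{1}_{n-1}^{T}=1-b_{1}$ and $\mathbf{a}_{(-1)}\mathbf{1}_{n-1}^{T}=1-a_{1}$, both immediate from $\mathbf{a},\mathbf{b}\in\mathcal{C}_{n}$. Alternatively one could recognize $\mathbf{Q}_{21}$ and $\mathbf{Q}_{22}$ as rank-one modifications of near-identity matrices and apply the Sherman--Morrison formula, but direct verification is shorter here. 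Once the block inverses are in hand, inverting the relation above and using $\mathbf{P}_{i,j}(\hat{c})^{-1}=\mathbf{P}_{i,j}(-\hat{c})$ for $i\neq j$ produces the expression for $\mathbf{Q}^{-1}$ claimed in \eqref{39}. The main obstacle is purely bookkeeping: keeping the order of the elementary factors straight as one passes to inverses and converts between left and right multiplications, since each individual computation is routine block-matrix algebra.
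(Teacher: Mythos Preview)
Your proposal is correct and follows essentially the same route as the paper: two elementary column operations $\mathbf{P}_{n,2n}(-1)$ and $\mathbf{P}_{2n,n}(\tilde{\tau}^{-1}\alpha\tau_{-})$ block-diagonalize $\mathbf{Q}$ into $\mathsf{diag}\{\mathbf{Q}_{21},\mathbf{Q}_{22}\}$, after which one inverts the two $n\times n$ blocks. The only cosmetic difference is that the paper derives $\mathbf{Q}_{21}^{-1}$ and $\mathbf{Q}_{22}^{-1}$ by first applying a cyclic row permutation $\mathbf{R}_{1}$ and then using the standard $2\times 2$ block inverse formula, whereas you propose verifying \eqref{33} and \eqref{35} directly via the simplex identities; both are equally short. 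One small note: from $\mathbf{Q}\,\mathbf{P}_{n,2n}(-1)\,\mathbf{P}_{2n,n}(\tilde{\tau}^{-1}\alpha\tau_{-})=\mathsf{diag}\{\mathbf{Q}_{21},\mathbf{Q}_{22}\}$ you get $\mathbf{Q}^{-1}$ immediately as the product of these same $\mathbf{P}$ factors (in the same order, with the same arguments) times the block-diagonal inverse, so the formula $\mathbf{P}_{i,j}(\hat c)^{-1}=\mathbf{P}_{i,j}(-\hat c)$ is not actually needed.
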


\begin{proof}
Multiplying the $n$-th column of $\mathbf{Q}$ by $-1$ and adding the resultant
column to the $2n$-th column of $\mathbf{Q}$ gives
\begin{equation*}
\mathbf{Q}_{1}=\mathbf{QP}_{n,2n}\left( -1\right) =\left(
\begin{array}{cc}
\begin{array}{cc}
-b_{1}^{-1}\mathbf{b}_{\left( -1\right) } & 1 \\
\mathbf{I}_{n-1} & \mathbf{1}_{n-1}^{T}
\end{array}
& \mathbf{0}_{n\times n} \\
\begin{array}{cc}
\mathbf{0}_{n\times \left( n-1\right) } & -\alpha \tau _{-}\mathbf{1}_{n}^{T}
\end{array}
&
\begin{array}{cc}
-a_{1}^{-1}\mathbf{a}_{\left( -1\right) } & \tilde{\tau} \\
\mathbf{I}_{n-1} & \tilde{\tau}\mathbf{1}_{n-1}^{T}
\end{array}
\end{array}
\right)
\end{equation*}
with $\tilde{\tau}=\alpha \left( \tau _{-}-\tau _{+}\right) $.
Multiplying the $2n$-th column of $\mathbf{Q}_{1}$ by $\tilde{\tau}^{-1}\alpha
\tau _{-}$ and adding the resultant column to the $n$-th column of $\mathbf{Q}_{1}$ gives
\begin{equation*}
\mathbf{Q}_{2}=\mathbf{Q}_{1}\mathbf{P}_{2n,n}\left( -\Delta ^{-1/2}\alpha
\tau _{-}\right) =\left(
\begin{array}{cc}
\begin{array}{cc}
-b_{1}^{-1}\mathbf{b}_{\left( -1\right) } & 1 \\
\mathbf{I}_{n-1} & \mathbf{1}_{n-1}^{T}
\end{array}
& \mathbf{0}_{n\times n} \\
\mathbf{0}_{n\times n} &
\begin{array}{cc}
-a_{1}^{-1}\mathbf{a}_{\left( -1\right) } & \tilde{\tau} \\
\mathbf{I}_{n-1} & \tilde{\tau}\mathbf{1}_{n-1}^{T}
\end{array}
\end{array}
\right) \text{.}
\end{equation*}
Let%
\begin{equation}
\mathbf{Q}_{21}=\left(
\begin{array}{cc}
-b_{1}^{-1}\mathbf{b}_{\left( -1\right) } & 1 \\
\mathbf{I}_{n-1} & \mathbf{1}_{n-1}^{T}
\end{array}
\right) \text{ and }\mathbf{Q}_{22}=\left(
\begin{array}{cc}
-a_{1}^{-1}\mathbf{a}_{\left( -1\right) } & \tilde{\tau} \\
\mathbf{I}_{n-1} & \tilde{\tau}\mathbf{1}_{n-1}^{T}
\end{array}
\right) \text{.}  \label{36}
\end{equation}
Then $\mathbf{Q}_{2}=\mathsf{diag}\left\{ \mathbf{Q}_{21},\mathbf{Q}_{22}\right\} $ and
\begin{equation}
\mathbf{Q}^{-1}=\mathbf{P}_{n,2n}\left( -1\right) \mathbf{P}_{2n,n}\mathbf{P}
_{2n,n}\left( \tilde{\tau}^{-1}\alpha \tau _{-}\right) \mathbf{Q}_{2}^{-1}\text{.}  \label{37}
\end{equation}
Therefore, it suffices to find $\mathbf{Q}_{2}^{-1}=\mathsf{diag}\left\{
\mathbf{Q}_{21}^{-1},\mathbf{Q}_{22}^{-1}\right\} $ or equivalently to find
$\mathbf{Q}_{21}^{-1}$ and $\mathbf{Q}_{22}^{-1}$.

Let
\begin{equation*}
\mathbf{R}_{1}=\left(
\begin{array}{cc}
\mathbf{0}_{\left( n-1\right) \times 1} & \mathbf{I}_{n-1} \\
1 & \mathbf{0}_{1\times \left( n-1\right) }
\end{array}%
\right) \text{.}
\end{equation*}%
Then $\mathbf{R}_{1}^{-1}=\mathbf{R}_{1}^{T}$, left multiplication by $%
\mathbf{R}_{1}$ permutes the rows, and right multiplication by $\mathbf{R}_{1}$
permutes the columns. Further,
\begin{equation*}
\mathbf{\tilde{Q}}_{21}=\mathbf{R}_{1}\mathbf{Q}_{21}=\left(
\begin{array}{cc}
\mathbf{I}_{n-1} & \mathbf{1}_{n-1}^{T} \\
-b_{1}^{-1}\mathbf{b}_{\left( -1\right) } & 1
\end{array}
\right) \text{.}
\end{equation*}
Since $1+b_{1}^{-1}\mathbf{b}_{\left( -1\right) }\mathbf{1}_{n-1}^{T}=b_{1}^{-1}\neq 0$, we have
\begin{equation}
\mathbf{\tilde{Q}}_{21}^{-1}=\left(
\begin{array}{cc}
\mathbf{I}_{n-1}-\mathbf{1}_{n-1}^{T}\mathbf{b}_{\left( -1\right) } & -b_{1}\mathbf{1}_{n-1}^{T} \\
\mathbf{b}_{\left( -1\right) } & b_{1}
\end{array}
\right)   \label{38}
\end{equation}
and $\mathbf{Q}_{21}^{-1}=\mathbf{\tilde{Q}}_{21}^{-1}\mathbf{R}_{1}$, which implies (\ref{33}). To get $\mathbf{Q}_{22}^{-1}$, we start from
\begin{equation*}
\mathbf{\tilde{Q}}_{22}=\mathbf{R}_{1}\mathbf{Q}_{22}=\left(
\begin{array}{cc}
\mathbf{I}_{n-1} & \tilde{\tau}\mathbf{1}_{n-1}^{T} \\
-a_{1}^{-1}\mathbf{a}_{\left( -1\right) } & \tilde{\tau}
\end{array}
\right) \text{.}
\end{equation*}
Since $\tilde{\tau}+a_{1}^{-1}\mathbf{a}_{\left( -1\right) }\tilde{\tau}
\mathbf{1}_{n-1}^{T}=\tilde{\tau}a_{1}^{-1}\neq 0$, we see
\begin{equation}
\mathbf{\tilde{Q}}_{22}^{-1}=\left(
\begin{array}{cc}
\mathbf{I}_{n-1}-\mathbf{1}_{n-1}^{T}\mathbf{a}_{\left( -1\right) } & -a_{1}%
\mathbf{1}_{n-1}^{T} \\
\tilde{\tau}^{-1}\mathbf{a}_{\left( -1\right) } & \tilde{\tau}^{-1}a_{1}%
\end{array}%
\right)   \label{34}
\end{equation}%
and $\mathbf{Q}_{22}^{-1}=\mathbf{\tilde{Q}}_{22}^{-1}\mathbf{R}_{1}$, which implies (\ref{35}).

Combining (\ref{33}), (\ref{35}) and (\ref{37}), we get (\ref{39}), which completes the proof.
\end{proof}

\autoref{Thm:BasisExplicit} shows that $\mathbf{Q}^{-1}$ has an easy and explicit form that allows quick
computation even when $n$ is large, since $\mathbf{Q}_{21}^{-1}$ and $\mathbf{Q}_{22}^{-1}$
are very simple and $\mathbf{P}_{n,2n}\left( -1\right) $
and $\mathbf{P}_{2n,n}\left( \tilde{\tau}^{-1}\alpha \tau
_{-}\right) $ are only two linear operations on two columns of $\mathsf{diag}
\left\{ \mathbf{Q}_{21}^{-1},\mathbf{Q}_{22}^{-1}\right\} $. The inverse $
\mathbf{Q}^{-1}$ helps give the explicit decomposition of $\mathbf{M}$ and the explicit solution $\left\{ \boldsymbol{z}_{t}\right\} _{t\in \mathbb{Z}_{+}}$
in \autoref{Cor:SolutionExplicit} that reveals its long term behavior.

\section{The Explicit Solution and Its Properties\label{Sec:Solution}}

We are ready to provide the explicit solution $\left\{ \boldsymbol{z}_{t}\right\} _{t\in \mathbb{Z}_{+}}$ to model (\ref{3}) using the explicit decomposition of $\mathbf{M}$ in terms of $\mathbf{J}$, $\mathbf{Q}$ and $\mathbf{Q}^{-1}$ given in
\autoref{Thm:JordanForm}, \autoref{Thm:Basis}, and \autoref{Thm:BasisExplicit}.

\begin{corollary}
\label{Cor:SolutionExplicit}Under the conditions of \autoref{Thm:Basis},
model (\ref{3}) has the explicit solution
\begin{equation}
\boldsymbol{z}_{t+1}=\mathbf{QJ}^{t+1}\mathbf{Q}^{-1}\boldsymbol{z}_{0}
+\sum_{i=0}^{t}\mathbf{QJ}^{i}\mathbf{Q}^{-1}\boldsymbol{\gamma}_{t-i}
\label{28}
\end{equation}
and explicit, equivalent solution
\begin{equation}
\boldsymbol{\tilde{z}}_{t+1}=\mathbf{J}^{t+1}\boldsymbol{\tilde{z}}_{0}
+\sum_{i=0}^{t}\mathbf{J}^{i}\boldsymbol{\tilde{\gamma}}_{t-i}\text{,}
\label{5}
\end{equation}
where $\boldsymbol{\tilde{z}}_{t}=\mathbf{Q}^{-1}\boldsymbol{z}_{t}$ and
$\boldsymbol{\tilde{\gamma}}_{t}=\mathbf{Q}^{-1}\boldsymbol{\gamma}_{t}$ for
$t\in\mathbb{Z}_{+}$.
\end{corollary}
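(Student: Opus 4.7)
The plan is to prove the corollary by straightforward induction on $t$, using the eigen-decomposition of $\mathbf{M}$ furnished by the three preceding theorems. First I would observe that \autoref{Thm:JordanForm}, \autoref{Thm:Basis}, and \autoref{Thm:BasisExplicit} jointly give $\mathbf{M}=\mathbf{Q}\mathbf{J}\mathbf{Q}^{-1}$ with $\mathbf{Q}$ nonsingular, so that for any integer $k\geq 0$ the power $\mathbf{M}^{k}=\mathbf{Q}\mathbf{J}^{k}\mathbf{Q}^{-1}$ by telescoping. This is the key identity that will allow every $\mathbf{M}^{k}$ in the unrolled recursion to be replaced by the much simpler $\mathbf{Q}\mathbf{J}^{k}\mathbf{Q}^{-1}$.

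Next I would unroll \eqref{3} itself. Applied inductively, $\boldsymbol{z}_{t+1}=\mathbf{M}\boldsymbol{z}_{t}+\boldsymbol{\gamma}_{t}$ yields $\boldsymbol{z}_{t+1}=\mathbf{M}^{t+1}\boldsymbol{z}_{0}+\sum_{i=0}^{t}\mathbf{M}^{i}\boldsymbol{\gamma}_{t-i}$; this is a routine induction with the base case $\boldsymbol{z}_{1}=\mathbf{M}\boldsymbol{z}_{0}+\boldsymbol{\gamma}_{0}$ matching the formula at $t=0$, and the inductive step consisting of applying $\mathbf{M}$ to the hypothesis and reindexing the convolution sum to absorb the new $\boldsymbol{\gamma}_{t}$. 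Substituting $\mathbf{M}^{k}=\mathbf{Q}\mathbf{J}^{k}\mathbf{Q}^{-1}$ into each summand then delivers \eqref{28} directly.

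To derive the equivalent form \eqref{5}, I would left-multiply \eqref{28} by $\mathbf{Q}^{-1}$. Using $\mathbf{Q}^{-1}\mathbf{Q}=\mathbf{I}_{2n}$ causes the $\mathbf{Q}$ immediately to the right of each $\mathbf{Q}^{-1}$ to cancel, leaving $\mathbf{Q}^{-1}\boldsymbol{z}_{t+1}=\mathbf{J}^{t+1}\mathbf{Q}^{-1}\boldsymbol{z}_{0}+\sum_{i=0}^{t}\mathbf{J}^{i}\mathbf{Q}^{-1}\boldsymbol{\gamma}_{t-i}$. The definitions $\tilde{\boldsymbol{z}}_{t}=\mathbf{Q}^{-1}\boldsymbol{z}_{t}$ and $\tilde{\boldsymbol{\gamma}}_{t}=\mathbf{Q}^{-1}\boldsymbol{\gamma}_{t}$ then give \eqref{5}, and the equivalence of \eqref{28} and \eqref{5} is immediate because $\mathbf{Q}$ is nonsingular.

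There is really no substantive obstacle here: the corollary is a direct packaging of the decomposition $\mathbf{M}=\mathbf{Q}\mathbf{J}\mathbf{Q}^{-1}$ with the standard variation-of-constants formula for a linear inhomogeneous vector recursion. The only bookkeeping issue worth flagging is keeping the convolution index aligned (so that the noise at time $t$ is weighted by $\mathbf{J}^{0}=\mathbf{I}_{2n}$ and the noise at time $0$ by $\mathbf{J}^{t}$), which is handled cleanly in the inductive step by the reindexing $j=i+1$ inside the sum.
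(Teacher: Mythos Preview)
Your proposal is correct and follows essentially the same approach as the paper: unroll the recursion \eqref{3} to $\boldsymbol{z}_{t+1}=\mathbf{M}^{t+1}\boldsymbol{z}_{0}+\sum_{i=0}^{t}\mathbf{M}^{i}\boldsymbol{\gamma}_{t-i}$, invoke the decomposition $\mathbf{M}=\mathbf{Q}\mathbf{J}\mathbf{Q}^{-1}$ from \autoref{Thm:JordanForm}, \autoref{Thm:Basis}, and \autoref{Thm:BasisExplicit} to replace each $\mathbf{M}^{k}$ by $\mathbf{Q}\mathbf{J}^{k}\mathbf{Q}^{-1}$, and then left-multiply by $\mathbf{Q}^{-1}$ to obtain \eqref{5}. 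Your write-up is in fact a bit more explicit than the paper's about the induction and the reindexing of the convolution, but the argument is the same.
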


\begin{proof}
By results in \autoref{Sec:CanonicalForm}, model (\ref{3}) is just%
\[
\boldsymbol{z}_{t+1}=\mathbf{M}^{t+1}\boldsymbol{z}_{0}+\sum_{i=0}%
^{t}\mathbf{M}^{i}\boldsymbol{\gamma}_{t-i}%
\]
with the initial value $\boldsymbol{z}_{0}$, where $\mathbf{Q}$ is given in
(\ref{14}), $\mathbf{J}$ in (\ref{4}), and $\mathbf{Q}^{-1}$ in (\ref{39}).
This implies (\ref{28}) and (\ref{5}), and completes the proof.
\end{proof}

In other words, $\left\{  \boldsymbol{z}_{t}\right\}  _{t\in\mathbb{Z}}$ can
be represented almost as a vector moving average model of order $t-1$ with
independent and identically distributed (i.i.d.) errors $\left\{
\boldsymbol{\gamma}_{t}\right\}  _{t\in\mathbb{Z}}$.

\subsection{Nonstationarity}

Recall that a stochastic process is second-order stationary if its
covariance function of a fixed lag depends only on the lag but not on the
time index. In order to study the behavior of $\left\{ \boldsymbol{z}
_{t}\right\} _{t\in \mathbb{Z}}$, we need the following lemma on equivalence
of second-order stationarity.

\begin{lemma}
\label{Lm:EquivalenceOnStationarity}Under the conditions of
\autoref{Thm:Basis}, both or neither of the sequence $\left\{
\boldsymbol{\tilde{z}}_{t}\right\}  _{t\in\mathbb{Z}_{+}}$ defined in
\autoref{Cor:SolutionExplicit} and $\left\{  \boldsymbol{z}_{t}\right\}
_{t\in\mathbb{Z}_{+}}$ are second-order stationary.
\end{lemma}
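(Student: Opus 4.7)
The plan is to exploit the fact that $\boldsymbol{\tilde z}_t = \mathbf{Q}^{-1}\boldsymbol{z}_t$ holds for every $t \in \mathbb{Z}_+$ with the \emph{same} nonsingular, deterministic matrix $\mathbf{Q}^{-1}$ (explicitly given in \autoref{Thm:BasisExplicit}). Second-order stationarity of a vector process is preserved under any fixed invertible linear transformation, so the equivalence should follow at the level of first and second moments with no appeal to distributional assumptions on $\boldsymbol{\gamma}_t$.

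Concretely, I would first show the forward direction: assume $\{\boldsymbol{z}_t\}_{t\in\mathbb{Z}_+}$ is second-order stationary, so there is a constant vector $\boldsymbol{\mu}$ with $\mathbb{E}[\boldsymbol{z}_t]=\boldsymbol{\mu}$ for all $t$ and a matrix-valued function $\mathbf{K}(h)$ with $\mathrm{Cov}(\boldsymbol{z}_t,\boldsymbol{z}_{t+h})=\mathbf{K}(h)$ for all $t,h$. Then by linearity of expectation, $\mathbb{E}[\boldsymbol{\tilde z}_t]=\mathbf{Q}^{-1}\boldsymbol{\mu}$, which is constant in $t$, and
\[
\mathrm{Cov}(\boldsymbol{\tilde z}_t,\boldsymbol{\tilde z}_{t+h}) = \mathbf{Q}^{-1}\,\mathrm{Cov}(\boldsymbol{z}_t,\boldsymbol{z}_{t+h})\,(\mathbf{Q}^{-1})^{T} = \mathbf{Q}^{-1}\mathbf{K}(h)(\mathbf{Q}^{-1})^{T},
\]
which depends only on $h$. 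Hence $\{\boldsymbol{\tilde z}_t\}_{t\in\mathbb{Z}_+}$ is second-order stationary.

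For the converse direction, the identity $\boldsymbol{z}_t = \mathbf{Q}\boldsymbol{\tilde z}_t$, which is available because $\mathbf{Q}$ is nonsingular by \autoref{Thm:Basis}, lets me repeat verbatim the same computation with the roles of $\boldsymbol{z}_t$ and $\boldsymbol{\tilde z}_t$ interchanged and $\mathbf{Q}^{-1}$ replaced by $\mathbf{Q}$. Since the two implications together exhaust the statement, this finishes the proof. There is essentially no obstacle here: the argument is routine once invertibility and time-independence of $\mathbf{Q}$ are noted, and the only thing to be slightly careful about is to record that existence of second moments for one sequence transfers to the other because $\|\mathbf{Q}^{\pm 1}\|$ is a finite constant, so $\mathbb{E}\|\boldsymbol{z}_t\|^2<\infty$ and $\mathbb{E}\|\boldsymbol{\tilde z}_t\|^2<\infty$ are equivalent.
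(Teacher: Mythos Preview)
Your proposal is correct and follows essentially the same route as the paper: both arguments rest on the identity $\mathrm{Cov}(\boldsymbol{\tilde z}_{t},\boldsymbol{\tilde z}_{t+h})=\mathbf{Q}^{-1}\,\mathrm{Cov}(\boldsymbol{z}_{t},\boldsymbol{z}_{t+h})\,(\mathbf{Q}^{-1})^{T}$ together with the nonsingularity of $\mathbf{Q}$. Your write-up is in fact a bit more explicit than the paper's, spelling out the converse direction via $\boldsymbol{z}_t=\mathbf{Q}\boldsymbol{\tilde z}_t$ and the transfer of finite second moments, but the underlying idea is identical.
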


\begin{proof}
Let
\begin{equation}
\mathbf{\Gamma}_{t+\tau',t}=\mathbb{E}\left[  \left(  \boldsymbol{z}_{t+\tau'
}-\mathbb{E}\left(  \boldsymbol{z}_{t+\tau'}\right)  \right)  \left(
\boldsymbol{z}_{t}-\mathbb{E}\left(  \boldsymbol{z}_{t}\right)  \right)
^{T}\right]  \label{29}
\end{equation}
for $t,\tau'\in\mathbb{Z}_{+}$, where $\mathbb{E}$ the expectation with respect
to $\mathbb{P}$. Then $\mathbb{E}\left(  \boldsymbol{\tilde{z}}_{t}\right)
=\mathbf{Q}^{-1}\mathbb{E}\left(  \boldsymbol{z}_{t}\right)  $ and
\begin{equation}
\mathbf{\tilde{\Gamma}}_{t+\tau',t}=\mathbb{E}\left[  \left(
\boldsymbol{\tilde{z}}_{t+\tau'}-\mathbb{E}\left(  \boldsymbol{\tilde{z}}
_{t}\right)  \right)  \left(  \boldsymbol{\tilde{z}}_{t+\tau'}-\mathbb{E}
\left(  \boldsymbol{\tilde{z}}_{t}\right)  \right)  ^{T}\right]
=\mathbf{Q}^{-1}\mathbf{\Gamma}_{t+\tau',t}\left(  \mathbf{Q}^{-1}\right)
^{T}\text{.}\label{40}
\end{equation}
This, together with the nonsingularity of $\mathbf{Q}$, implies that either both
or neither $\left\{  \boldsymbol{\tilde{z}}_{t}\right\}  _{t\in\mathbb{Z}_{+}}$
and $\left\{  \boldsymbol{z}_{t}\right\}  _{t\in\mathbb{Z}_{+}}$ are second-order
stationary. This completes the proof.
\end{proof}

By \autoref{Lm:EquivalenceOnStationarity}, it suffices to study the second-order
stationarity of $\left\{  \boldsymbol{\tilde{z}}_{t}\right\}  _{t\in
\mathbb{Z}}$. By the assumptions on $\left\{  \varepsilon_{i}\rndbr{t}\right\}
_{i=1}^{2n}$ given in \autoref{Sec:Intro}, we have the mean vector of
$\boldsymbol{\gamma}_{t}$ as%
\[
\boldsymbol{\mu}_{t}=\left(  \alpha\mu_{1},\ldots,\alpha\mu_{n},-\beta\mu
_{n+1},\ldots,-\beta\mu_{2n}\right)  ^{T}
\]
and the covariance matrix $\mathbf{\Sigma}_{t}$ of $\boldsymbol{\gamma}_{t}$ as
\[
\mathbf{\Sigma}_{t}=\mathsf{diag}\left\{  \alpha^{2}\sigma_{1}^{2}
,\ldots,\alpha^{2}\sigma_{n}^{2},\beta^{2}\sigma_{n+1}^{2},\ldots,\beta
^{2}\sigma_{2n}^{2}\right\}  \text{.}%
\]
The following result shows that $\left\{ \boldsymbol{z}_{t}\right\} _{t\in
\mathbb{Z}_{+}}$ is not second-order stationary.

\begin{proposition}
\label{Prob:nonStat}Suppose $\boldsymbol{z}_{0}$ is independent of the
sequence $\left\{ \boldsymbol{\gamma }_{t}\right\} _{t\in \mathbb{Z}_{+}}$
and has covariance matrix $\mathbf{G}$. Then $\mathbf{\tilde{\Gamma}}
_{t+\tau ^{\prime },t}$ in (\ref{40}) for $t \geq 2$ satisfies
\begin{equation}
\mathbf{\tilde{\Gamma}}_{t+\tau ^{\prime },t}=\mathbf{J}^{t+\tau ^{\prime }}
\mathbf{GJ}^{t}+\sum_{i=1}^{t-1}\mathbf{J}^{\tau ^{\prime }+i}\mathbf{\Sigma
}_{0}\mathbf{J}^{i}\text{.}  \label{40a}
\end{equation}
Therefore, neither $\left\{ \boldsymbol{z}_{t}\right\} _{t\in \mathbb{Z}}$
nor $\left\{ \boldsymbol{\tilde{z}}_{t}\right\} _{t\in \mathbb{Z}_{+}}$ is
second-order stationary.
\end{proposition}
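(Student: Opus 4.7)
The plan is to invoke \autoref{Lm:EquivalenceOnStationarity} to reduce the problem to the diagonalized sequence $\clbr{\boldsymbol{\tilde{z}}_t}_{t \in \mathbb{Z}_+}$, for which the explicit representation \eqref{5} is a clean MA-type expansion and $\mathbf{J}$ is diagonal by the hypotheses of \autoref{Thm:Basis}. Substituting two consecutive shifts into \eqref{5} gives explicit expansions of $\boldsymbol{\tilde{z}}_t$ and $\boldsymbol{\tilde{z}}_{t+\tau'}$ as $\mathbf{J}$-powers of $\boldsymbol{\tilde{z}}_0$ plus weighted sums in the i.i.d.\ innovations $\clbr{\boldsymbol{\tilde{\gamma}}_s}$, on which the autocovariance computation will be carried out.

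Centering these expansions and taking the outer-product expectation that defines $\mathbf{\tilde{\Gamma}}_{t+\tau',t}$, the cross product decomposes into three families. A pure initial-condition piece yields $\mathbf{J}^{t+\tau'}\mathbf{G}\rndbr{\mathbf{J}^t}^T$, where $\mathbf{G}$ is understood in the diagonalized basis, i.e.\ as the covariance of $\boldsymbol{\tilde{z}}_0=\mathbf{Q}^{-1}\boldsymbol{z}_0$. The two mixed terms pairing $\boldsymbol{\tilde{z}}_0$ with an innovation vanish by the hypothesized independence of $\boldsymbol{z}_0$ from $\clbr{\boldsymbol{\gamma}_s}$. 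The remaining double sum in innovation cross products is the main bookkeeping step: the i.i.d.\ assumption annihilates every unequal-time pair, leaving only those $\rndbr{j,k}$ for which the innovation indices agree, namely $j=k+\tau'$; since $\tau' \geq 0$, this is consistent with the original ranges $0 \leq k \leq t-1$ and $0 \leq j \leq t+\tau'-1$. The surviving sum collapses to $\sum_{k}\mathbf{J}^{k+\tau'}\mathbf{\Sigma}_0\rndbr{\mathbf{J}^k}^T$, and because $\mathbf{J}$ is diagonal the transpose drops, giving the quadratic-form identity \eqref{40a}.

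For the nonstationarity conclusion, I would inspect the diagonal entry of $\mathbf{\tilde{\Gamma}}_{t+\tau',t}$ at the position indexing the eigenvalue $\lambda_1=1-\alpha$ (or, if $\alpha=1$ so $\lambda_1=0$, switch to $\lambda_2=1-\beta$, which is nonzero because \eqref{eq:rst} excludes $\rndbr{\alpha,\beta}=\rndbr{1,1}$). That entry equals $\lambda_1^{2t+\tau'}\mathbf{G}_{11}+\lambda_1^{\tau'}\sqbr{\mathbf{\Sigma}_0}_{11}\sum_{k=0}^{t-1}\lambda_1^{2k}$. Since each $\sigma_i>0$, the original $\mathbf{\Sigma}_0$ is positive definite, and conjugation by the nonsingular $\mathbf{Q}^{-1}$ preserves positive definiteness, so the coefficient $\sqbr{\mathbf{\Sigma}_0}_{11}$ in the diagonalized basis is strictly positive. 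The partial geometric sum $\sum_{k=0}^{t-1}\lambda_1^{2k}$ equals $t$ when $\lambda_1^2=1$ and $\rndbr{\lambda_1^{2t}-1}/\rndbr{\lambda_1^2-1}$ otherwise, and in both regimes it strictly varies with $t$. Hence the entry is not a function of $\tau'$ alone, so $\clbr{\boldsymbol{\tilde{z}}_t}$ fails second-order stationarity, and \autoref{Lm:EquivalenceOnStationarity} transfers the conclusion to $\clbr{\boldsymbol{z}_t}$.

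The main obstacle is the careful index matching that collapses the innovation double sum to a single indexed sum with the correct power offset; a secondary concern is ensuring, in the nonstationarity argument, that the selected diagonal entry genuinely depends on $t$, which forces one to pick an eigenvalue $\lambda_i \neq 0$ out of $\clbr{\lambda_1,\lambda_2}$. The assumption \eqref{eq:rst} is precisely what guarantees that at least one of $\lambda_1,\lambda_2$ is nonzero, so this last point is never an actual obstruction under the standing hypotheses.
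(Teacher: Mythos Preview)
Your proposal is correct and follows essentially the same route as the paper: center, expand via \eqref{5}, use independence of $\boldsymbol{z}_0$ from the innovations and the i.i.d.\ structure of $\clbr{\boldsymbol{\tilde{\gamma}}_s}$ to collapse the cross terms, then invoke \autoref{Lm:EquivalenceOnStationarity}. The paper's proof is in fact terser than yours---it simply asserts that $\mathbf{\tilde{\Gamma}}_{t+\tau',t}$ ``depends on $t$'' without further argument---so your diagonal-entry inspection (selecting a nonzero $\lambda_i$ via \eqref{eq:rst} and using positive definiteness of the conjugated $\mathbf{\Sigma}_0$) is a welcome elaboration rather than a departure.
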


\begin{proof}
To compute $\mathbf{\tilde{\Gamma}}_{t+\tau ^{\prime },t}$, it suffices to assume that $\boldsymbol{z}_{0}$ and
each $\boldsymbol{\gamma }_{t},t\in \mathbb{Z}_{+}$ has mean zero but with
their corresponding covariances. Namely, it suffices to assume%
\begin{equation*}
\boldsymbol{\tilde{z}}_{t+1}=\mathbf{J}^{t+1}\boldsymbol{\hat{z}}
_{0}+\sum_{i=0}^{t}\mathbf{J}^{i}\boldsymbol{\hat{\gamma}}_{t-i},
\end{equation*}
where $\boldsymbol{\hat{z}}_{0}$ is the mean centered $\boldsymbol{z}_{0}$
and each $\boldsymbol{\hat{\gamma}}_{t}$ is the mean centered $\boldsymbol{\gamma }_{t}$. This implies that
\begin{equation*}
\mathbf{\tilde{\Gamma}}_{t+\tau ^{\prime },t}=\mathbb{E}\left[ \boldsymbol{
\tilde{z}}_{t+\tau ^{\prime }}\boldsymbol{\tilde{z}}_{t}^{T}\right] =
\mathbb{E}\left[ \left( \mathbf{J}^{t+\tau ^{\prime }}\boldsymbol{\hat{z}}
_{0}+\sum\nolimits_{i=0}^{t+\tau ^{\prime }-1}\mathbf{J}^{i}\boldsymbol{\hat{
\gamma}}_{t-i}\right) \left( \mathbf{J}^{t}\boldsymbol{\hat{z}}
_{0}+\sum\nolimits_{i=0}^{t-1}\mathbf{J}^{i}\boldsymbol{\hat{\gamma}}
_{t-i}\right) \right] ,
\end{equation*}
which simplifies into \eqref{40a}. Since $\mathbf{\tilde{\Gamma}}_{t+\tau ^{\prime
},t}$ depends on $t$, $\left\{ \boldsymbol{\tilde{z}}_{t}\right\} _{t\in
\mathbb{Z}_{+}}$ is not second-order stationary, and by  \autoref
{Lm:EquivalenceOnStationarity} nor is $\left\{ \boldsymbol{z}_{t}\right\}
_{t\in \mathbb{Z}_{+}}$. This completes the proof.
\end{proof}

\subsection{Limiting Behavior}\label{subsec:limit}

For the representations given in \eqref{28} and \eqref{5}, it is easy to explore the
long term behavior of $\left\{ \boldsymbol{\tilde{z}}_{t}\right\} _{t\in
\mathbb{Z}_{+}}$ than that of $\left\{ \boldsymbol{z}_{t}\right\} _{t\in
\mathbb{Z}}$. Recall
\begin{equation*}
\mathbf{J}=\mathsf{diag}\left\{ \lambda _{1}\mathbf{I}_{n-1},\lambda _{3}
\mathbf{I}_{1},\lambda _{2}\mathbf{I}_{n-1},\lambda _{4}\mathbf{I}%
_{1}\right\}
\end{equation*}
for which $\lambda _{1}=1-\alpha $, $\lambda _{2}=1-\beta $, $\lambda
_{3,4}=2^{-1}\left( 2-\alpha -\beta \pm \sqrt{\Delta }\right) $.

\begin{corollary}
\label{Cor:LongTermProperty}Under the conditions of \autoref{Thm:Basis}, if
moreover
\begin{equation}
0<\max_{1\leq i\leq 4}\left\vert \lambda _{i}\right\vert <1,  \label{45}
\end{equation}%
then, as $t\rightarrow \infty $,
\begin{equation}
\boldsymbol{\tilde{z}}_{t+1} \overset{\mathcal{D}}{\rightarrow}
\mathsf{diag}\left\{ \tilde{\lambda}_{1}\mathbf{I}_{n-1},\tilde{\lambda}_{3}
\mathbf{I}_{1},\tilde{\lambda}_{2}\mathbf{I}_{n-1},\tilde{\lambda}_{4}
\mathbf{I}_{1}\right\} \boldsymbol{\tilde{\gamma}}_{0}  \label{43}
\end{equation}
and
\begin{equation*}
\boldsymbol{z}_{t+1} \overset{\mathcal{D}}{\rightarrow}
\mathbf{Q}\mathsf{diag}\left\{ \tilde{\lambda}_{1}\mathbf{I}_{n-1},\tilde{
\lambda}_{3}\mathbf{I}_{1},\tilde{\lambda}_{2}\mathbf{I}_{n-1},\tilde{\lambda
}_{4}\mathbf{I}_{1}\right\} \mathbf{Q}^{-1}\boldsymbol{\gamma }_{0}\text{,}
\end{equation*}
where $\overset{\mathcal{D}}{\rightarrow}
$ denotes convergence in distribution and $\tilde{\lambda}_{i}=\left(
1-\lambda _{i}\right) ^{-1}$ for $1\leq i\leq 4$.
\end{corollary}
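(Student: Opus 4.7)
The plan is to read the conclusion directly off the explicit representation
\[
\boldsymbol{\tilde{z}}_{t+1}=\mathbf{J}^{t+1}\boldsymbol{\tilde{z}}_{0}+\sum_{i=0}^{t}\mathbf{J}^{i}\boldsymbol{\tilde{\gamma}}_{t-i}
\]
supplied by \autoref{Cor:SolutionExplicit}, and then transfer it to $\boldsymbol{z}_{t+1}=\mathbf{Q}\boldsymbol{\tilde{z}}_{t+1}$ by the continuous mapping theorem. The two ingredients I need are that the initial-condition term dies and that the convolution sum converges in distribution; both are handled cleanly because $\mathbf{J}$ is diagonal.

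First I would dispose of the initial-condition term: since $\mathbf{J}=\mathsf{diag}\{\lambda_{1}\mathbf{I}_{n-1},\lambda_{3}\mathbf{I}_{1},\lambda_{2}\mathbf{I}_{n-1},\lambda_{4}\mathbf{I}_{1}\}$ is diagonal, hypothesis (\ref{45}) forces $\|\mathbf{J}^{t+1}\|\to 0$ geometrically, and so $\mathbf{J}^{t+1}\boldsymbol{\tilde{z}}_{0}\to \mathbf{0}$ in probability regardless of the law of $\boldsymbol{\tilde{z}}_{0}$. Next I would analyse the convolution. The distributional assumptions in \autoref{Sec:Model} make $\{\boldsymbol{\gamma}_{t}\}_{t\in\mathbb{Z}_{+}}$ an i.i.d.\ Gaussian sequence, and multiplying by the fixed matrix $\mathbf{Q}^{-1}$ preserves this structure for $\{\boldsymbol{\tilde{\gamma}}_{t}\}_{t\in\mathbb{Z}_{+}}$. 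Reindexing then gives the distributional identity
\[
\sum_{i=0}^{t}\mathbf{J}^{i}\boldsymbol{\tilde{\gamma}}_{t-i}\;\overset{\mathcal{D}}{=}\;\sum_{i=0}^{t}\mathbf{J}^{i}\boldsymbol{\tilde{\gamma}}_{i},
\]
and under (\ref{45}) the right-hand side is an a.s.\ convergent sum of independent Gaussians by Kolmogorov's convergence theorem, because its summands have means $\mathbf{J}^{i}\mathbb{E}[\boldsymbol{\tilde{\gamma}}_{0}]$ and covariances $\mathbf{J}^{i}\mathsf{Cov}(\boldsymbol{\tilde{\gamma}}_{0})(\mathbf{J}^{i})^{T}$, both summable by the geometric decay of $\mathbf{J}^{i}$.

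Since $\mathbf{J}$ is diagonal with spectral radius strictly less than one, the operator series sums to
\[
\sum_{i=0}^{\infty}\mathbf{J}^{i}=(\mathbf{I}_{2n}-\mathbf{J})^{-1}=\mathsf{diag}\left\{\tilde{\lambda}_{1}\mathbf{I}_{n-1},\tilde{\lambda}_{3}\mathbf{I}_{1},\tilde{\lambda}_{2}\mathbf{I}_{n-1},\tilde{\lambda}_{4}\mathbf{I}_{1}\right\},
\]
which is exactly the diagonal matrix appearing in (\ref{43}). Combining the two steps by Slutsky's theorem then produces (\ref{43}), where $\boldsymbol{\tilde{\gamma}}_{0}$ on the right-hand side encodes the common marginal law of the innovations. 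The second conclusion follows on multiplying through by $\mathbf{Q}$, using $\boldsymbol{\gamma}_{0}=\mathbf{Q}\boldsymbol{\tilde{\gamma}}_{0}$ and applying the continuous mapping theorem to the linear map $\mathbf{u}\mapsto\mathbf{Q}\mathbf{u}$.

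The main obstacle I expect is the identification of the almost sure limit of $\sum_{i=0}^{t}\mathbf{J}^{i}\boldsymbol{\tilde{\gamma}}_{i}$ with the compact symbolic expression on the right of (\ref{43}): the mean is $(\mathbf{I}_{2n}-\mathbf{J})^{-1}\mathbb{E}[\boldsymbol{\tilde{\gamma}}_{0}]$, which matches immediately, but the covariance is $\sum_{i\geq 0}\mathbf{J}^{i}\mathsf{Cov}(\boldsymbol{\tilde{\gamma}}_{0})(\mathbf{J}^{i})^{T}$, which is strictly smaller than $(\mathbf{I}_{2n}-\mathbf{J})^{-1}\mathsf{Cov}(\boldsymbol{\tilde{\gamma}}_{0})(\mathbf{I}_{2n}-\mathbf{J})^{-T}$ in general; I would therefore read the right-hand side of (\ref{43}) as shorthand for the stationary Gaussian limit whose first two moments are read off from the geometric series above, and verify its finiteness from (\ref{45}).
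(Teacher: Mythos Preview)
Your approach is essentially the paper's: the paper omits the proof entirely and merely remarks that it follows from the geometric-series identity $\sum_{i\geq 0}\lambda_j^{\,i}=(1-\lambda_j)^{-1}$ under (\ref{45}), the i.i.d.\ structure of $\{\boldsymbol{\tilde{\gamma}}_t\}$, and the relation $\boldsymbol{\tilde{z}}_t=\mathbf{Q}^{-1}\boldsymbol{z}_t$. Your use of Slutsky, Kolmogorov, and the continuous mapping theorem simply makes those steps explicit.

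Your closing observation is well taken and in fact pinpoints a genuine imprecision that the paper does not acknowledge. The actual limiting covariance is $\sum_{i\geq 0}\mathbf{J}^{i}\mathsf{Cov}(\boldsymbol{\tilde{\gamma}}_0)(\mathbf{J}^{i})^{T}$, whose $(p,q)$ entry carries the factor $(1-\lambda_{k_p}\lambda_{k_q})^{-1}$, whereas the literal law of $(\mathbf{I}_{2n}-\mathbf{J})^{-1}\boldsymbol{\tilde{\gamma}}_0$ carries $(1-\lambda_{k_p})^{-1}(1-\lambda_{k_q})^{-1}$; since $\mathsf{Cov}(\boldsymbol{\tilde{\gamma}}_0)=\mathbf{Q}^{-1}\mathbf{\Sigma}_0(\mathbf{Q}^{-1})^{T}$ is not diagonal, these do not coincide. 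Your reading of the right-hand side of (\ref{43}) as shorthand for the stationary Gaussian limit is the charitable and correct interpretation; the paper's sketch does not confront this distinction.
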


The proof of \autoref{Cor:LongTermProperty} is omitted as it follows
immediately from the convergence of $\sum_{i=0}^{\infty }\lambda
_{j}^{i}=\left( 1-\lambda _{j}\right) ^{-1}\neq 0$ for $1\leq j\leq 4$ when (
\ref{45}) holds, the fact that $\boldsymbol{\tilde{z}}_{t}=\mathbf{Q}^{-1}%
\boldsymbol{z}_{t}$ and $\boldsymbol{\tilde{\gamma}}_{t}=\mathbf{Q}^{-1}%
\boldsymbol{\gamma }_{t}$ for $t\in \mathbb{Z}_{+}$, and the i.i.d. property of $\left\{ \boldsymbol{\tilde{\gamma
}}_{t}\right\} _{t\in \mathbb{Z}_{+}}$. It is clear that any $\lambda _{i}>1$
for some $1\leq i\leq 4$ lead to the explosion in the variance of the
corresponding subvector of $\boldsymbol{\tilde{z}}_{t}$ and that of some subvector of $\boldsymbol{z}_{t}$ as $t\rightarrow
\infty $. Further, when $\lambda _{i},1\leq i\leq 4$ have different signs,
oscillations in components of $\boldsymbol{\tilde{z}}_{t}$ and those of $
\boldsymbol{z}_{t}$ will be induced. However, a complete analysis of such oscillations
seems to be difficult to perform for the original series $\left\{ \boldsymbol{z}_{t}\right\} _{t\in
\mathbb{Z}}$ due to the term $\mathbf{J}^{i}\mathbf{Q}^{-1}\boldsymbol{\gamma}_{t-i}$
in \eqref{28}.

\section{The Induced Model for Business Cycles}\label{sec:busiCyc}

Recall that the aim of proposing model (\ref{3}) in \cite{Ormerod:2001} is to induce a model for business
cycles in the economy created by the growth rates $x_i\rndbr{t}$ under the influence of the sentiments $y_i\rndbr{t}$.
Such an induced model has been given in \cite{Ormerod:2001} but with a miscalculated forcing term (see identity (10.9A) therein) and is called a ``damped pendulum''. Despite its being call so, the induced model for business cycles is similar to an autoregressive (AR) model of order $2$ in $\bar{x}_t$ for $t \in \mathbb{Z}_{+}$, which we now provide and analyze.


From model (\ref{3}), we have, by weighting corresponding with $\mathbf{a}$ and $\mathbf{b}$,
\begin{equation}
\left\{
\begin{array}{c}
\bar{x}\left( t+1\right) =\left( 1-\alpha \right) \bar{x}\left( t\right)
+\alpha \bar{y}\left( t\right) +\alpha \bar{\varepsilon}\left( t\right) , \\
\bar{y}\left( t+1\right) =\left( 1-\beta \right) \bar{y}\left( t\right)
-\beta \bar{x}\left( t\right) -\beta \bar{\eta}\left( t\right) ,
\end{array}
\right.   \label{46}
\end{equation}%
where $\bar{\varepsilon}\left( t\right) =\sum_{i=1}^{n}b_{i}\varepsilon
_{i}\left( t\right) $ and $\bar{\eta}\left( t\right)
=\sum_{i=1}^{n}a_{i}\eta _{i}\left( t\right) $.
From the first identity in \eqref{46}, we obtain
\begin{equation}
\bar{y}\left( t\right) =\alpha ^{-1}\left[ \bar{x}\left( t+1\right) -\left(
1-\alpha \right) \bar{x}\left( t\right) -\alpha \bar{\varepsilon}\left(
t\right) \right] =\alpha ^{-1}\Delta \bar{x}\left( t\right) +\bar{x}\left(
t\right) -\bar{\varepsilon}\left( t\right) ,  \label{47}
\end{equation}%
where $\Delta \bar{x}\left( t\right) =\bar{x}\left( t+1\right) -\bar{x}
\left( t\right) $. Plugging \eqref{47} back into the second identity of \eqref{46}, we get
\begin{eqnarray}
&&\alpha ^{-1}\Delta \bar{x}\left( t+1\right) -\alpha ^{-1}\Delta \bar{x}
\left( t\right) +\bar{x}\left( t+1\right) -\bar{x}\left( t\right) +\beta
\alpha ^{-1}\Delta \bar{x}\left( t\right) +2\beta \bar{x}\left( t\right)
\label{48} \\
&=&\bar{\varepsilon}\left( t+1\right) -\left( 1-\beta \right) \bar{
\varepsilon}\left( t\right) -\beta \bar{\eta}\left( t\right) .  \notag
\end{eqnarray}
After simplification, \eqref{48} becomes
\begin{equation}
\Delta ^{2}\bar{x}\left( t\right) +\left( \alpha +\beta \right) \Delta \bar{x
}\left( t\right) +2\alpha \beta \bar{x}\left( t\right) =h\left( t\right) ,
\label{49}
\end{equation}%
where $\Delta ^{2}\bar{x}\left( t\right) =\Delta \bar{x}\left( t+1\right)
-\Delta \bar{x}\left( t\right) $, $\Delta \bar{\varepsilon}\left( t\right) =
\bar{\varepsilon}\left( t+1\right) -\bar{\varepsilon}\left( t\right) $ and
\begin{equation}
h\left( t\right) =\alpha \Delta \bar{\varepsilon}\left( t\right) +\alpha
\beta \left[ \bar{\varepsilon}\left( t\right) -\bar{\eta}\left( t\right)\right] .  \label{50}
\end{equation}
Equations (\ref{49}) and (\ref{50}) together describe what is called in \cite{Ormerod:2001} a ``damped pendulum'', for which
$h\left( t\right)$ is the forcing term. Note however that $h\left( t\right)$ in \eqref{50}, the correct one, is different
than the mistaken one in identity (10.9A) therein.

\subsection{The Periodic Solution}

On the other hand, (\ref{49}) and (\ref{50}) almost form a second order difference equation in $\bar{x}_t$ for $t \in \mathbb{Z}_{+}$ except that the random error $h\left(t\right)$ involves a term at time $t+1$. Specifically,
\begin{equation}
\bar{x}\left( t+2\right) +\left( \alpha +\beta -2\right) \bar{x}\left(
t+1\right) +\left( 1-\alpha -\beta + 2 \alpha \beta\right) \bar{x}\left( t\right) =h\left(
t\right) \text{.}  \label{51}
\end{equation}
It should be noted that \eqref{51} is not an autoregressive model of order $2$ since $h\rndbr{t}$ involves $\bar{\varepsilon}\rndbr{t+1}$ at time $t+1$.
To explore the properties of $\clbr{\bar{x}_{t}}_{t \in \mathbb{Z}_{+}}$, let $\kappa _{1}=\alpha +\beta -2$, $\kappa _{2}=1-\alpha -\beta+2 \alpha \beta$, and the homogeneous version of \eqref{51} be
\begin{equation}
\bar{x}\left( t+2\right) +\kappa _{1}\bar{x}\left( t+1\right) +\kappa _{2}
\bar{x}\left( t\right) =0.  \label{52}
\end{equation}
Then the characteristic polynomial for both \eqref{51} and \eqref{52} is
\begin{equation*}
q\left( w\right) =w^{2} + \kappa_1 w+ \kappa_2,
\end{equation*}
which has roots $\rho _{1,2}=\frac{-\kappa _{1}\pm \sqrt{\Delta_{1}}}{2}$
with $\Delta _{1}=\kappa _{1}^{2}-4\kappa _{2}=\Delta =\alpha ^{2}+\beta^{2}-6\alpha \beta $ (note that $\Delta$ is defined right before \autoref{Thm:JordanForm}).

Let $\mathcal{L}$ be the lag operator of order one, and recall $d_{1}=\left( 3-2\sqrt{2}\right)\beta$
and $d_{2}=\left( 3+2\sqrt{2}\right) \beta$. We have the following result that
describes when $\clbr{\bar{x}_{t}}_{t \in \mathbb{Z}_{+}}$ can display
periodic behavior and gives the solution $\clbr{\bar{x}_{t}}_{t \in \mathbb{Z}_{+}}$.

\begin{theorem}\label{thm:Pediodic}
Set $\omega =-\arctan \left(\kappa_1^{-1} \sqrt{\vert\Delta _{1}\vert}\right)$. For model (\ref{52}), if
\begin{equation}\label{56a}
  \min\clbr{d_1,d_2} < \alpha < \max\clbr{d_1,d_2},
\end{equation}
then
the general, periodic solution $\clbr{\bar{x}_{t}}_{t \in \mathbb{Z}_{+}}$ is
\begin{equation}
\bar{x}\left( t\right) =c_{1}\left\vert \rho _{1}\right\vert ^{t}\cos \left(c_{2}+\omega t\right) \label{53a}
\end{equation}
for some constants $c_{1}$ and $c_{2}$. If additionally
\begin{equation}
\frac{\beta -1}{2\beta -1} <\alpha < \frac{\beta }{2\beta -1} \text{ and }\beta>\frac{1}{2}\label{56}
\end{equation}
or
\begin{equation}
\frac{\beta }{2\beta -1} <\alpha < \frac{\beta -1}{2\beta -1} \text{ and } \beta<\frac{1}{2}, \label{57}
\end{equation}
holds, then the general, periodic solution $\clbr{\bar{x}_{t}}_{t \in \mathbb{Z}_{+}}$ is
\begin{equation}
\bar{x}\left( t\right) =c_{1}\left\vert \rho _{1}\right\vert ^{t}\cos \left(
c_{2}+\omega t\right) +\left( 1-\rho _{1}\mathcal{L}\right) ^{-1}\left(
1-\rho _{2}\mathcal{L}\right) ^{-1}h\left( t\right), \label{53b}
\end{equation}
where the constants $c_{1}$ and $c_{2}$ can be determined from the initial
values $\bar{x}\left( 0\right) $ and $\bar{x}\left( 1\right) $.
\end{theorem}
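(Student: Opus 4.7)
The plan is to handle the homogeneous and the inhomogeneous parts of \eqref{51} separately, linking each set of hypotheses to an algebraic property of the characteristic roots $\rho_{1,2}$ of $q(w)$.

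First I would verify that \eqref{56a} is exactly the discriminant condition $\Delta_1<0$. Since $\Delta_1=\Delta=\alpha^2-6\alpha\beta+\beta^2$, the quadratic in $\alpha/\beta$ is negative iff $\alpha/\beta\in(3-2\sqrt{2},\,3+2\sqrt{2})$, i.e., iff $\min\{d_1,d_2\}<\alpha<\max\{d_1,d_2\}$. In this regime the roots are complex conjugates $\rho_{1,2}=-\kappa_1/2\pm i\sqrt{|\Delta_1|}/2$. By Vieta's formulas $|\rho_1|^2=\rho_1\rho_2=\kappa_2$, and the argument of the upper-half-plane root is $\omega=-\arctan(\kappa_1^{-1}\sqrt{|\Delta_1|})$; the minus sign accounts for $\kappa_1=\alpha+\beta-2<0$ in the economically meaningful regime, so that $-\kappa_1/2>0$ and $\rho_1$ lies in the right half-plane.

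Next, for the homogeneous equation \eqref{52}, any real-valued general solution has the form $\bar x(t)=A\rho_1^t+\bar A\rho_2^t$ for some $A\in\mathbb{C}$. Writing $A=(c_1/2)e^{ic_2}$ and $\rho_1^t=|\rho_1|^t e^{i\omega t}$, Euler's formula collapses this sum to $\bar x(t)=c_1|\rho_1|^t\cos(c_2+\omega t)$, establishing \eqref{53a}; the two real constants $c_1,c_2$ absorb the initial data $\bar x(0),\bar x(1)$.

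For the inhomogeneous equation \eqref{51}, I would factor the characteristic operator as $(1-\rho_1\mathcal{L})(1-\rho_2\mathcal{L})$ and invert each factor as a geometric series in $\mathcal{L}$, which requires $|\rho_1|=|\rho_2|=\sqrt{\kappa_2}<1$. A direct computation shows $\kappa_2\in(0,1)$ iff both $(2\beta-1)\alpha>\beta-1$ and $(2\beta-1)\alpha<\beta$ hold; splitting according to the sign of $2\beta-1$ recovers exactly \eqref{56} and \eqref{57}. Under either hypothesis the particular solution $(1-\rho_1\mathcal{L})^{-1}(1-\rho_2\mathcal{L})^{-1}h(t)$ is well-defined, and adding it to the homogeneous solution yields \eqref{53b}.

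The main obstacle I anticipate is the polar-form bookkeeping: choosing the correct branch of $\arctan$, fixing the sign of $\omega$ so that the Euler reduction produces a real-valued cosine with the stated argument, and algebraically verifying that the apparently asymmetric bounds \eqref{56}/\eqref{57} collapse precisely to the stability range $0<\kappa_2<1$. A secondary nuisance is that $h(t)$ in \eqref{50} involves the forward-shifted innovation $\bar\varepsilon(t+1)$, so the lag-operator inversion must be read as a formal algebraic identity applied to whatever driving sequence is supplied, rather than as strict convergence of the geometric series at a specific $t$.
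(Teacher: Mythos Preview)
Your proposal is correct and follows essentially the same route as the paper: identify \eqref{56a} with $\Delta_1<0$, write the complex-conjugate roots in polar form to obtain the cosine solution \eqref{53a}, then invert the factored lag polynomial as geometric series under $|\rho_1|^2=\kappa_2\in(0,1)$ and translate this into \eqref{56}/\eqref{57}. The paper additionally lists the $\Delta_1=0$ and $\Delta_1>0$ cases for completeness before specializing, but the substance of the argument is identical to yours.
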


\begin{proof}

WLOG, assume $d_1 = \min\clbr{d_1,d_2}$ and $d_2 = \max\clbr{d_1,d_2}$. There are three cases for the general solution to \eqref{52}:

\begin{enumerate}
\item $\Delta _{1}=0$ if and only if $\alpha =d_{1}$ or $\alpha =d_{2}$. In this
case, $\rho _{1}=\rho _{2}=-2^{-1}\kappa _{1}$ and
$\bar{x}\left( t\right) =\left( c_{0}+c_{1}t\right) \rho _{1}^{t}$
is the general solution to (\ref{52}) for some constants $c_0$ and $c_1$.

\item $\Delta _{1}>0$ if and only if $\alpha <d_{1}$ or $\alpha >d_{2}$.
  In this case, $\bar{x}\left( t\right) =c_{1}\rho _{1}^{t}+$ $c_{2}\rho_{2}^{t}$
  is the general solution to (\ref{52}) for some constants $c_1$ and $c_2$.

\item $\Delta _{1}<0$ if and only if $d_{1}<\alpha <d_{2}$. In this case, let
$\rho _{1}=\left\vert \rho _{1}\right\vert e^{\mathrm{i}\omega }$ with $
\omega =-\arctan \left(\kappa_1^{-1} \sqrt{\vert\Delta _{1}\vert}\right) \in (-\pi
,\pi ]$, where $\mathrm{i}^2=-1$. Then $\rho _{1}=\left\vert \rho _{1}\right\vert e^{-\mathrm{i}\omega }$ and
\eqref{53a} is the general, periodic solution to (\ref{52}) for some constants $c_1$ and $c_2$.
\end{enumerate}

So, it is left to find a special solution to (\ref{51}) to obtain \eqref{53b}.
When $\left\vert \rho _{1}\right\vert
<1$, the operators $\left( 1-\rho _{j}\mathcal{L}\right) ,j=1,2$ are
invertible and the inverses $\left(1-\rho _{j}\mathcal{L}\right)^{-1}=\sum_{s=0}^{\infty }\rho _{j}^{s}\mathcal{L}^{s}$ for $j=1,2$, where $\mathcal{L}^{s}$ is the composition of $\mathcal{L}$ by itself $s$ times. However,
$\left\vert \rho _{1}\right\vert^2=\kappa_2$, and  $0 < \left\vert \rho _{1}\right\vert<1$ if and only if
$0 < 1-\alpha-\beta +2\alpha \beta <1$, which holds when $\frac{\beta -1}{2\beta -1}%
<\alpha <\frac{\beta }{2\beta -1}$ and $\beta >\frac{1}{2}$ or when $\frac{
\beta }{2\beta -1}<\alpha <\frac{\beta -1}{2\beta -1}$ and $\beta <\frac{1}{2}$.
Therefore, when additionally \eqref{56} or \eqref{57} holds, we have \eqref{53b}.
This completes the proof.
\end{proof}

Note that pairs $\rndbr{\alpha,\beta}$ satisfying \eqref{56a} and \eqref{56}, or \eqref{56a} and \eqref{57} in \autoref{thm:Pediodic} do exist, which means that the solution \eqref{53b} always exits.
A trajectory from the model \eqref{51}  is displayed in \autoref{FigA}:
\begin{figure}[b]
\centering
\includegraphics[width=0.95\textwidth,height=0.35\textheight]{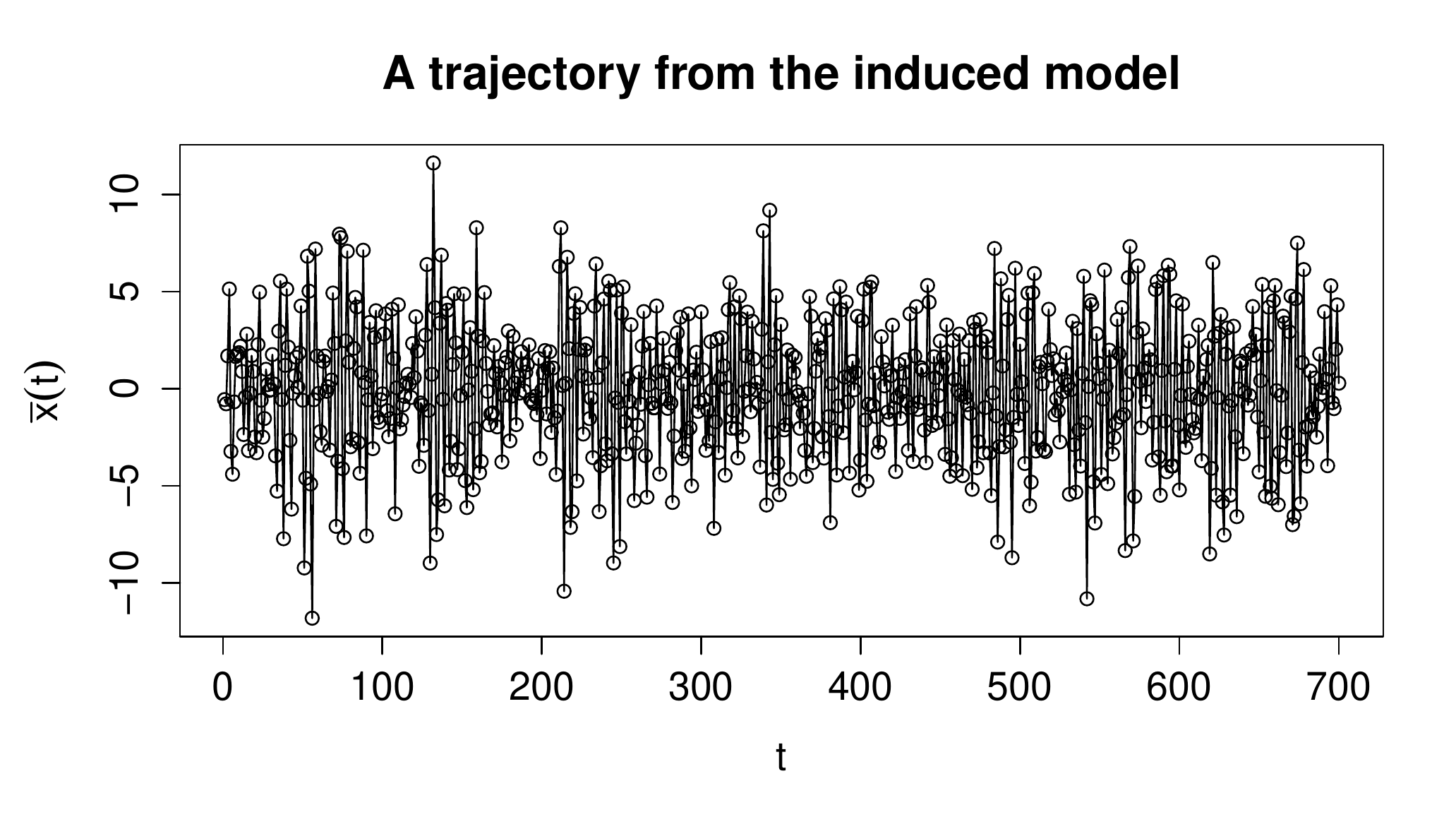}
\caption[A trajectory]{A trajectory of $\bar{x}\rndbr{t}$ for $t=0,\ldots,700$ simulated from the model \eqref{51} with $\alpha = 1.09804$ and $\beta = 0.7$, where $\bar{\varepsilon}\rndbr{t} \propto\mathsf{N}\rndbr{0,1}$ and
$\bar{\eta}\rndbr{t} \propto\mathsf{N}\rndbr{0,1.6^2}$ for each $t$.
The trajectory shows clearly that the periodicity of $\bar{x}\rndbr{t}$ in \eqref{53b} is subject to random perturbation.}
\label{FigA}
\end{figure}


\section{Discussion\label{Sec:Discussion}}

For model (\ref{3}), we have provided the explicit decomposition of its
transition matrix $\mathbf{M}$, the explicit solution, and two key properties of this solution.
In addition, we have provided and analyzed the solution of the model for business cycles \eqref{51} induced by (\ref{3}).
The explicit representations we have derived help better understand the econometric behavior to the solutions of these models and
can serve as a starting point for further analysis of them.

\section*{Acknowledgments}

I would like to thank Professor L. Thomas Ramsey for motivating the study on model \eqref{3}, and Dr. Mingtao Lu for pointing out some references on models of economic growth and of business cycles. The major part of this article was completed when the author was at the Department of Mathematics, University of Hawaii, HI, USA.

\bibliographystyle{chicago}

\end{document}